\pgfplotsset{compat=newest}
\DeclarePairedDelimiter{\ceil}{\lceil}{\rceil}
\def\mindex#1{\index{#1}}
\def\sq{\hbox{\rlap{$\sqcap$}$\sqcup$}}
\def\qed{\ifmmode\sq\else{\unskip\nobreak\hfil
\penalty50\hskip1em\null\nobreak\hfil\sq
\parfillskip=0pt\finalhyphendemerits=0\endgraf}\fi\medskip}
\long\def\defbox#1{\framebox[.9\hsize][c]{\parbox{.85\hsize}{%
\parindent=0pt
\baselineskip=12pt plus .1pt      
\parskip=6pt plus 1.5pt minus 1pt 
 #1}}}
\long\def\beginbox#1\endbox{\subsection*{}%
\hbox{\hspace{.05\hsize}\defbox{\medskip#1\bigskip}}%
\subsection*{}}
\def\endbox{}
\newsavebox{\junk}
\savebox{\junk}[1.6mm]{\hbox{$|\!|\!|$}}
\def\det{{\mathop{\rm det}}}
\def\bfmath#1{{\mathchoice{\mbox{\boldmath$#1$}}%
{\mbox{\boldmath$#1$}}%
{\mbox{\boldmath$\scriptstyle#1$}}%
{\mbox{\boldmath$\scriptscriptstyle#1$}}}}
\def\bfmY{\bfmath{Y}}
\def\bfmhhaY{\bfmath{\hhaY}} 
\def\bfmhhaY{\hbox to 0pt{$\widehat{\bfmY}$\hss}\widehat{\phantom{\raise 1.25pt\hbox{$\bfmY$}}}}
\def\til={{\widetilde =}}
 \def\FRAC#1#2#3{\genfrac{}{}{}{#1}{#2}{#3}}
\def\ddtp{{\mathchoice{\FRAC{1}{d^{\hbox to 2pt{\rm\tiny +\hss}}}{dt}}%
{\FRAC{1}{d^{\hbox to 2pt{\rm\tiny +\hss}}}{dt}}%
{\FRAC{3}{d^{\hbox to 2pt{\rm\tiny +\hss}}}{dt}}%
{\FRAC{3}{d^{\hbox to 2pt{\rm\tiny +\hss}}}{dt}}}}
\def\average#1,#2,{{1\over #2} \sum_{#1}^{#2}}
\def\eye(#1){{\bf(#1)}\quad}
\newtheorem{theorem}{{\bf Theorem}}
\newtheorem{remark}{{\bf Remark}}
\newtheorem{lemma}{{\bf Lemma}}
\def\eq#1/{(\ref{e:#1})}
\newcommand{\beqn}[1]{\notes{#1}%
\begin{eqnarray} \elabel{#1}}
\newcommand{\eeqn}{\end{eqnarray} }
\newcommand{\beq}[1]{\notes{#1}%
\begin{equation}\elabel{#1}}
\newcommand{\eeq}{\end{equation}}
\def\bdes{\begin{description}}
\def\edes{\end{description}}
\newcounter{rmnum}
\newcounter{anum}
\def\ass(#1:#2){(#1\ref{#1:#2})}
\def\ritem#1{
\item[{\sf \ass(\current_model:#1)}]
}
\newenvironment{recall-ass}[1]{%
\begin{description}
\def\current_model{#1}}{
\end{description}
}
\long\def\comment#1{}
\newfont{\bb}{msbm10 scaled 1100}
\newcommand{\EE}{\mbox{\bb E}}
\newcommand{\hv}{{\bm h}}
\newcommand{\Em}{{\bm E}}
\newcommand{\Hm}{{\bm H}}
\newcommand{\Id}{{\bm I}}
\newcommand{\Xm}{{\bm X}}
\newcommand{\Ym}{{\bm Y}}
\newcommand{\Zm}{{\bm Z}}
\newcommand{\Cc}{{\cal C}}
\newcommand{\Jc}{{\cal J}}
\newcommand{\Kc}{{\cal K}}
\newcommand{\Nc}{{\cal N}}
\newcommand{\Tc}{{\cal T}}
\renewcommand{\det}{{\hbox{det}}}
\renewcommand{\arg}{{\hbox{arg}}}
\newcommand{\SNR}{{\sf snr}}
\newcommand{\herm}{{\sf H}}
\newcommand{\transp}{{\sf T}}
\newcommand{\mk}{{\rm -\!o\!-}}
\begin{document}

\title{Achievable Rates for a Distributed Antenna System with No Channel State Information at the Central Processor}

\author{
	\IEEEauthorblockN{Yi Song, \emph{Student Member, IEEE},
	Hao Xu, \emph{Senior Member, IEEE},
        Kai Wan, \emph{Member, IEEE}, 
		Kai-Kit Wong, \emph{Fellow, IEEE},
		Giuseppe Caire, \emph{Fellow, IEEE},
		and
		Shlomo Shamai (Shitz), \emph{Life Fellow, IEEE}
	}
	\thanks{
	The work of Y. Song and G. Caire was supported by BMBF Germany in the program of ``Souverän. Digital. Vernetzt.'' Joint Project 6G-RIC (Project IDs 16KISK030).
	The work of H. Xu was supported by the Fundamental Research Funds for the Central Universities under grant 2242025R10001.
The work of K.~Wan was  partially funded by the National Natural Science Foundation of China (NSFC-12141107),  the Key Research and Development Program of Wuhan under Grant 2024050702030100, and Wuhan ``Chen
Guang'' Pragram under Grant 2024040801020211.
The work of S. Shamai has been supported by the German Research
Foundation (DFG) via the German-Israeli Project Cooperation (DIP), under Project SH 1937/1-1.

Y. Song and G. Caire are with the Faculty of Electrical Engineering and Computer Science at the Technical University of Berlin, 10587 Berlin, Germany (e-mail: yi.song@tu-berlin.de; caire@tu-berlin.de).
    
H. Xu is with the National Mobile Communications Research Laboratory, Southeast University, Nanjing 210096, China (e-mail: hao.xu@seu.edu.cn).

K.-K. Wong is with the Department of Electronic and Electrical Engineering, University College London, WC1E 7JE London, U.K., and also with the Department of Electronic Engineering, Kyung Hee University, Yongin-si, Gyeonggi-do 17104, Republic of Korea (e-mail: kai-kit.wong@ucl.ac.uk).

K. Wan is with School of Electronic Information and Communications, Huazhong University of Science and Technology, Wuhan, China (email: kai\_wan@hust.edu.cn).

S. Shamai (Shitz) is with the Viterbi Electrical Engineering Department, Technion Israel Institute of Technology, Haifa 32000, Israel (e-mail: sshlomo@ee.technion.ac.il).

Corresponding author: Hao Xu.
	}
}

\maketitle

\begin{abstract}
A recent trend in wireless communications considers the migration of traditional monolithic base stations to
the so-called ``disaggregated architecture'', where radio units (RUs) implement only the low-level physical layer functionalities such as demodulation, and A/D conversion, while the high-level physical layer, such as channel decoding, is
implemented as software-defined functions running on general-purpose hardware 
in some remote central processing unit (CP).  The corresponding information theoretic model for the uplink (from the wireless users to the CP) is a multiaccess-relay channel with {\em primitive oblivious} relays.
The relays (RUs) are ``oblivious'', as they are agnostic of the users' codebooks, and ``primitive'',  since the {\em fronthaul} 
links (from RUs to CP) are error-free with limited capacity. This class of networks has been intensely studied in the information theoretic literature, where several approximated or exact (under certain conditions) capacity results have been derived. 
In particular, in the Gaussian case, the model has been analyzed for fixed and known channel state. This paper is motivated by the fact that, in practice,  the channel state is a random process, and it is estimated at the base station side 
through uplink pilot symbols sent by the users.  The pilot dimension may take up a large portion 
of the channel coherence block, i.e.,  the number of symbols over which the channel state remains approximately constant. 
Hence, sending both pilot and data symbols from the relays to the CP may require a significant overhead, 
especially when the fronthaul capacity is small. 
As a prototypical problem,  we consider the ergodic achievable rate for a ``diamond'' network formed by a 
single user  and two relays where the channel state is known at the relays, 
but not known at the CP. Despite its simplicity, it turns out that an exact characterization of the 
ergodic capacity for this model is surprisingly difficult. Then, we develop an analytical upper bound
and methods to numerically evaluate such upper bound. 
In order to obtain lower bounds, we resort to specific  analytically/numerically tractable achievability strategies. When designing such strategies, we need to take into account that the CP has no channel state information and that each relay has only statistical knowledge of the channel state of the other relay. 
Under these constraints, we propose some achievable schemes based on low-dimensional oblivious processing at the relays. Next, we consider a more challenging case with two users and two relays.
For this case, we develop an analytical upper bound and discuss its numerical evaluation. It turns out that the achievable 
schemes for the single-user case cannot be extended to the two-user case. Hence, we propose a new achievable scheme for the two-user case by jointly encoding the received signal and the channel state. Simulation results show that the proposed achievable schemes in both cases perform close to the upper bound across a broad range of system parameters. 
\end{abstract}

\begin{IEEEkeywords}
Disaggregated radio access network, capacity, oblivious relay, Rayleigh fading.
\end{IEEEkeywords}

\IEEEpeerreviewmaketitle

\section{Introduction}
\label{introduction}

Traditional Radio Access Network (RAN) architectures in the 2nd to 4th generation of wireless systems 
are based on a monolithic building block, physically implemented in the base station sites. 
With the advent of the 5th generation (5G), a more flexible, scalable,  
and cost-effective ``disaggregated'' RAN was promoted  \cite{3gpp2018technical, ahmadi20195g, xu2025distributed}. 
This new RAN is based on functional splits and identifies two basic building blocks, 
Radio Units (RUs) and Decentralized Units (DUs).\footnote{The so-called 3GPP disaggregated RAN architecture, 
specifies also a third basic building block referred to as Centralized Unit (CU), 
which implements non-realtime functions of network orchestration, resource allocation, etc. However, 
the CU is located in the telecom operator network (cloud), and is not attached to the fronthaul.
Hence, CUs are irrelevant for the problem treated in this paper.}
RUs implement the lower physical layer functionalities 
(RF front-end, demodulation, A/D conversion). Clusters of several 
RUs are connected to a DU, that implements the upper physical layer functionalities
(multiple-input and multiple-output (MIMO) precoding, channel decoding) as well as some upper layer functions, up to the interface with the {\em core network}. 
A central idea of the disaggregated RAN is that the RUs can operate in an oblivious way, agnostic of 
the users' codebooks, while the decoding functions that require this knowledge can be implemented as software-defined network functions in the DU and run on general-purpose hardware.\footnote{For this reason, the 3GPP disaggregated RAN is sometimes referred to as C-RAN (Cloud-RAN), since the virtualized functions are run ``in the cloud'' \cite{7018201, 6897914}.}
RUs are connected with the DU via a {\em fronthaul network}, usually implemented by  
point-to-point links, not interfering with the wireless access segment, i.e., the users-to-RUs channel. 

A popular information theoretic model motivated by the uplink of a disaggregated RAN architecture consists of
a multiaccess wireless relay channel formed by $L$ users, $K$ oblivious relays (i.e., the RUs),  
and a  central processor (CP) \cite{park2014fronthaul}.  The first hop, from the users to the relays, is modeled as a Gaussian interference network. 
The second hop (the fronthaul) is formed by a collection of non-interfering capacity-limited links (see Fig.~\ref{CRAN}).
The relays are oblivious in the sense that they are agnostic of the users' codebooks 
\cite{homri2018oblivious,katz2019gaussian,aguerri2019TIT}.
This corresponds to the fact that the low-level functions implemented in the RUs need to know only the signal format
(carrier, bandwidth, time-frequency multiplexing), but are agnostic of the specific modulation and coding 
used by each user to encode its information message. 
Two-hop relay networks where the second hop (from relays to end receiver) is formed by error-free non-interfering 
links with limited capacities are usually referred to as ``primitive'' relay networks 
\cite{kim2008capacity, katz2019gaussian,katz2021filtered,xudistributed,simeone2016cloud}.
This class of networks has been intensively investigated from an information theoretic viewpoint (see
Section~\ref{sec:relatedworks}). 

\begin{figure}
\centering
\begin{tikzpicture}[node distance = 0.03\textwidth]
\tikzstyle{neuron} = [circle, draw=black, fill=white, minimum height=0.05\textwidth, inner sep=0pt]
\tikzstyle{rect} = [rectangle, rounded corners, minimum width=0.05\textwidth, minimum height=0.05\textwidth,text centered, draw=black, fill=white]
    \node [neuron] (neuron1) {\small{User 1}};
    \node [below of=neuron1, yshift=-0.05\textwidth, neuron] (neuron2) {\small{User 2}};
    \node [below of=neuron2, yshift=-0.01\textwidth] (neuron3) {$...$};
    \node [below of=neuron3, yshift=-0.01\textwidth, neuron] (neuron4) {\small{User $L$}};
     \node [above of=neuron1, xshift=0.2\textwidth, yshift=0.05\textwidth, neuron] (add1) {$+$};
      \node [above of=add1, yshift=0.05\textwidth] (N1) {$N_1$};
      \node [right of=neuron2, xshift=0.18\textwidth, neuron] (add2) {$+$};
      \node [below of=add2, yshift=-0.01\textwidth] (add3) {$...$};
      \node [below of=add3, yshift=-0.1\textwidth, neuron] (add4) {$+$};
      \node [above of=add4, yshift=0.05\textwidth] (N3) {$N_K$};
      \draw [->,line width=1pt] (neuron1) --(add2);
        \draw [->,line width=1pt] (neuron2) -- (add2);
     \draw [->,line width=1pt] (neuron1) -- (add1);
          \draw [->,line width=1pt] (neuron2) -- (add1);
         \node [above of=add2, yshift=0.05\textwidth] (N2) {$N_2$};
    \node [right of=add1, xshift=0.15\textwidth, rect] (R1) {\small{Relay 1}};
     \node [right of=add2, xshift=0.15\textwidth, rect] (R2) {\small{Relay 2}};
     \node [below of=R2, yshift=-0.02\textwidth] (R3) {$...$};
         \node [left of=add4, xshift=0.22\textwidth, rect] (R4) {\small{Relay $K$}};
        \node [right of=R2, xshift=0.2\textwidth, rect] (D) {\small{CP}};
     \draw [->,line width=1pt] (N1) -- (add1);
      \draw [->,line width=1pt] (N3) -- (add4);
       \draw [->,line width=1pt] (neuron1) -- (add4);
         \draw [->,line width=1pt] (neuron2) -- (add4);
            \draw [->,line width=1pt] (neuron4) -- (add4);
             \draw [->,line width=1pt] (neuron4) -- (add1);
              \draw [->,line width=1pt] (neuron4) -- (add2);
     \draw [->,line width=1pt] (add1) -- (R1);
       \draw [->,line width=1pt] (R1) -- (D);
       \draw [->,line width=1pt] (N2) -- (add2);
               \draw [->,line width=1pt] (add4) -- (R4);
         \draw [->,line width=1pt] (add2) -- (R2);
         \draw [->,line width=1pt] (R2) -- (D);
                \draw [->,line width=1pt] (R4) -- (D);
\end{tikzpicture}
\caption{A disaggregated RAN model consisting of $L$ users, $K$ RUs (relays) and a centralized processor.}  
\label{CRAN}
\end{figure}
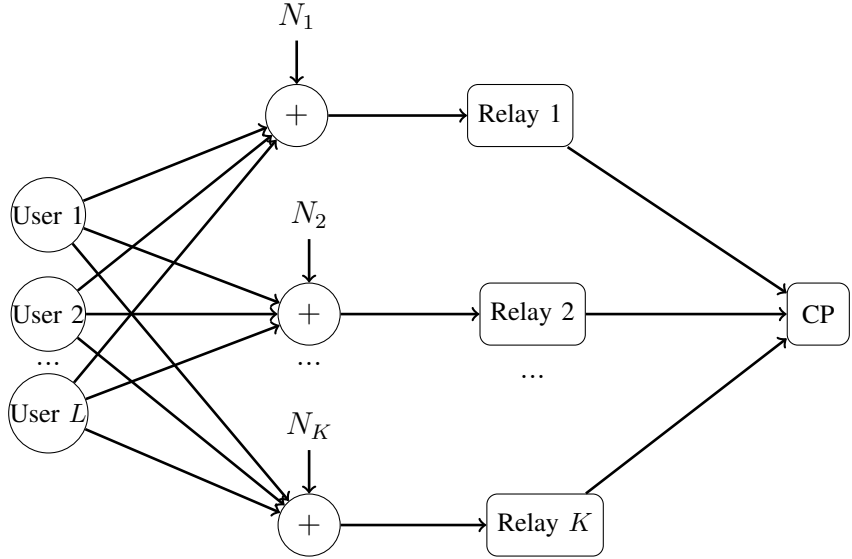


In order to motivate our work, it is important to point out that 
in the Gaussian case (relevant for wireless communication), the model has been analyzed for 
fixed and known channel state, i.e., the matrix of channel coefficients between the users' transmit antennas and 
the relays' receiving antennas is fixed and known to all. 
As a matter of fact, the channel state is a matrix-valued random process that stays approximately constant over 
coherence blocks of $T$ signal dimension in the time-frequency domain. In practical systems, 
in the uplink, users send pilot symbols in each coherence block, to allow the receiver to estimate the 
channel and apply a {\em coherent detection} scheme. {More fundamentally, 
from \cite{zheng2002communication} we know that
the high signal-to-noise ratio (SNR) capacity of a block-fading MIMO channel with ${L}$ transmit antennas, 
$K$ receive antennas, and coherence block length $T$, is given by 
$(1 - \frac{M^*}{T}) M^* \log \SNR + o(\log \SNR)$ where $M^* = \min\{L, K, \frac{T}{2}\}$.} 
From a simple cutset bound argument, this is 
also an upper bound to the sum capacity of a Gaussian multiaccess channel with $L$ single-antenna users, 
a receiver with $K$ antennas, and block-fading with coherence block $T$. In addition, the pre-log factor 
$(1 - \frac{M^*}{T}) M^*$ is achievable by using an explicit uplink pilot scheme letting $M^*$ users transmit mutually orthogonal pilots of dimension $M^*$ in each channel coherence block.
When $L$ and $K$ are comparable with $T$, the pre-log factor is maximized by letting 
$M^* = \frac{T}{2}$, i.e., the number of active users (and the corresponding pilot dimension) 
is equal to half of the channel coherence block.\footnote{In modern massive MIMO
we may have hundreds of antennas and $T$ may be as small as $200$ symbols, i.e., the time-frequency
dimension of a so-called ``resource block'' in the 5GNR standard \cite{3gpp2018technical}.}
The above argument shows that it is not unlikely that wireless systems may operate in a regime where 
the pilot dimension is a large fraction of the whole channel coherence block length.

The above high-SNR capacity result holds when the front-end rate between the receiver antennas 
and the processing unit is unbounded, i.e., the decoder has direct access to the signals received 
at the $K$ antenna ports. However, in a primitive relay network, the relays are connected to the central processor (CP) via capacity-limited links.
In this case, it is not known if the uplink piloting scheme is asymptotically optimal. Even insisting on the traditional pilot scheme (motivated by the practical implementation of wireless systems), when the  pilot dimension is a significant fraction of the coherence block, sending (through some oblivious processing, e.g., quantization) the received pilot field through the fronthaul may incur a large penalty. 

To illustrate this point, consider the simple case of $L = 1, K = 1$ where the channel
between the user and the relay is given by $Y = e^{j\phi} X + N$, where $N \sim \Cc\Nc(0,1)$, $\phi \sim $ Uniform$[0,2\pi]$ and 
$\EE[|X|^2] = \SNR$. Suppose that the random phase $\phi$ remains constant for blocks of $T = 2$ consecutive symbols, 
and is  independent and identically distributed (i.i.d.) across the blocks. A fixed and designed a prior pilot-based scheme sends $X = X_0$ as the first reference symbol of each block, and an information symbol $X$ as the second symbol of each block.  A naive fronthaul strategy would quantize and transmit through the fronthaul $2$ symbols per block. 
However, a simple oblivious processing at the relay may do better by sending 
only the second symbol in the block after phase de-rotation by the first reference symbol 
(one symbol per block).  In short,  when the fronthaul rate is a limiting factor and the pilot dimension is 
significant with respect to the coherence block, the oblivious relay network become highly non-trivial and, despite the many existing results summarized in Section \ref{sec:relatedworks},  the problem of determining the capacity region, or at least tight scaling laws of the sum  capacity, is wide open. 

{In order to make progress, in this work we bypass the pilot-based channel estimation issue
and consider a model where the channel state is given {\em exactly} at the relays 
but completely unknown to the CP. In addition, the channel state is i.i.d. across time.
Therefore, for each received signal, there is one independent channel coefficient at each relay. This models the fact that the signal dimension and the channel state dimension are comparable. 
The problem then falls in the class of networks studied in \cite{aguerri2019TIT}. 
However,  the capacity expression in \cite{aguerri2019TIT} is given for discrete memoryless channels. 
For the Gaussian case, achievable rates are given under the assumption of Gaussian inputs 
for fixed channel matrix known to both the relays and the CP. 
This generalizes to the case of random i.i.d. channel state known to all, 
but the actual computation of an ergodic achievable rate region or even just the sum rate is highly non-trivial (see Section \ref{problem_formu}).
In addition, when the CP does not know the channel state, the general capacity expression in \cite{aguerri2019TIT} still holds, 
{but  even evaluating achievable inner bounds becomes highly non-trivial, 
due to the unclear choice of certain auxiliary random variables (see discussion in Section \ref{achive_shmes_2}}).}

The problem at hand is also closely related to the so-called Information Bottleneck (IB) problem 
\cite{tishby2000information,hassanpour2017overview, shwartz2017opening, zaidi2020information, goldfeld2020information, xudistributed}.
In particular, it is well-known that in the {\em Gaussian IB} problem $X \mk Y \mk Z$, 
when the source $X$ and the observation $Y$ are jointly Gaussian, then  
$I(X;Z)$ is maximized subject to the bottleneck constraint $I(Y; Z) \leq C$ by letting 
$Z$ jointly Gaussian with $Y$ \cite{chechik2003information}. However, in the case where $Y = SX + N$ with $X, N, S$ independent Gaussian, the optimal choice of $Z$ in the bottleneck problem
$X \mk (Y,S) \mk Z$ is unknown.  Hence, even for this elementary model, the 
IB problem is non-trivial and numerically hard (e.g., no cardinality bounds). 

Instead of pursuing uncomputable/numerically intractable rate region expressions, 
in this paper our objective is to provide simple upper and lower bounds
for the achievable capacity in two simple cases of the general problem, 
with one and two users, and two relays, that can be relatively easily numerically evaluated.

\subsection{Related Work}  \label{sec:relatedworks}

{ 
The disaggregated RAN (or C-RAN) model with oblivious processing for a single user and two relays was first studied in \cite{dandervoich2008communication}. Sanderovich \emph{et. al} in \cite{dandervoich2008communication} derived an achievable rate and an upper bound on the model with discrete memoryless channels between the user and relays, and further extended the theoretical results to Gaussian channels. 
The converse bound was later found in \cite{estella2018distributed,aguerri2019distributed2} for both channels. 
{Interestingly, the single-user model was found to share the same capacity region as the Chief Executive Officer (CEO) problem under logarithmic loss \cite{courtade2013multiterminal}}. Furthermore, Aguerri \emph{et. al} in \cite{aguerri2019TIT} extended the capacity region to the general $K$-user case for both discrete memoryless and Gaussian vector channels (the latter 
subject to Gaussian inputs).

The aforementioned works \cite{aguerri2019TIT, dandervoich2008communication, estella2018distributed, aguerri2019distributed2} assume fixed channel states known to all nodes. In \cite{sanderovich2009distributed}, fast fading channels were considered, but under the assumption that the channel state was also perfectly known to the CP. Although this assumption may be hold approximately when the channel coherence block is very large with respect to the number of users, it is generally over-optimistic when the coherence block is of the same order of the number of users.  Caire \emph{et. al} in \cite{caire2018information} simplified the problem for the single user/single relay case 
by assuming that the relay has perfect knowledge of the channel state but the CP does not. These results where extended by 
Xu \emph{et. al} in \cite{IBxu} to the MIMO single user/single relay, where both the user and the relay have 
multiple antennas.  In \cite{xudistributed, song2023distributed},  we considered 
the disaggregated RAN model for single user and two relays under Raleigh fading channel for the single antenna and MIMO cases under different assumptions on the channel state knowledge. 
These papers can be seen as preliminary to this present paper.
}

\subsection{Contributions}  \label{sec:contributions}

In this paper, we study the single-user ($L = 1$) and two-user ($L = 2$) scenarios of the fronthaul-constrained disaggregated RAN model shown in Fig.~\ref{CRAN}, with $K = 2$ relays with i.i.d. Rayleigh fading channels.  The channel state is known at the relays (genie-aided) but not known at the CP. 
As said before, our focus is to derive simple and analytical upper bounds and achievable lower bounds on the 
ergodic (sum) rate under Gaussian inputs.\footnote{Sometimes this assumption is referred to as ``point-to-point coding''
in the sense  that users are restricted to use random Gaussian codebooks as if they were operating in a 
single-user (point-to-point) Gaussian channel \cite{baccelli2011interference}. 
This is regarded as an information theoretic version of the fact, dictated by practice, that in real systems
the channel codes are designed independently of the network scenario, and optimized for the single-user Gaussian channel.}

\begin{itemize}
    \item \textbf{Single-user case:} we start with the results in \cite{dandervoich2008communication} for fixed channel states and notice that the case of random channel state known only to the relays yields an intractable expression for the sum capacity. Then, we derive a rate upper bound by assuming that the CP also has the access to the channel state.
    The calculation of this upper bound is still highly not trivial since it involves 
    an optimization over a space of functions. Then, we provide a relaxed upper bound that can be analytically calculated, 
    and a lower bound which can be calculated by solving a convex optimization problem 
    which traps the original (tighter) upper bound. We also provide a 
    numerical method based on stochastic optimization via the Lyapunov Drift-Plus-Penalty (DPP) approach  \cite{georgiadis2006resource, neely2010stochastic}
    to approximate the upper bound. 

    Taking into account the fact that each relay has access only to its own channel channel state and has only statistical knowledge 
of the other relay's channel,  we propose three achievable schemes based on local relay oblivious 
processing.
    
    \item \textbf{Two-user case:} Based on the result in \cite{aguerri2019TIT} for fixed channel states, we derive the sum capacity for this model. However, a computable expression for the sum capacity appears to be a very hard 
    open problem, since it is not clear how to choose the involved auxiliary random variables. 
Then, we propose an upper bound that can be analytically computed, 
by assuming that relays can cooperate and that the CP has also perfect channel state information.
    
In contrast to the single-user case, the relay strategies based on local processing (e.g., 
    channel inversion or MMSE estimation) are either infeasible or highly suboptimal in the two-user setting. 
Taking into account the fact that each relay has access only to its own channel channel state and has only statistical knowledge 
of the other relay's channel, we propose an achievable scheme where each relay jointly compresses both its 
received signal and its channel state.
    
    \item We verify numerically that the gaps between the analytical upper bounds and the achievable rates 
    obtained proposed achievable schemes are quite small in several regime of interest, demonstrating the remarkable performance of the simple achievable schemes across a wide range of system parameters.
\end{itemize}

\subsection{Notations}

The real and complex fields by $\mathbb{R}$ and $\mathbb{C}$, respectively. 
We use boldface uppercase letters to indicate matrices, boldface lowercase letters for vectors, and 
calligraphic uppercase letters for sets. $\Id_K$ is the $K \times K$ identity matrix and $\mathbf{0}$ is 
the all-zero vector or matrix. $(\cdot)^\herm$ denotes conjugate transpose, and $\mathbb{E}[\cdot]$ is the expectation operation.
The superscript $(\cdot)^c$ for a set 
denotes the complement set. $\cdot \setminus \cdot$ denotes set subtraction operation, e.g., $\Kc \setminus k$ denotes the set $\Kc$ after eliminating the element $k$. To indicate a set of indexed variables or functions $\{a_k : k \in \Kc\}$ 
we use $a_{\Kc}$ or simply $\{a_k\}$ when the index set $\Kc$ is clear from the context.

\section{The Single-user Case}
\label{problem_formu}


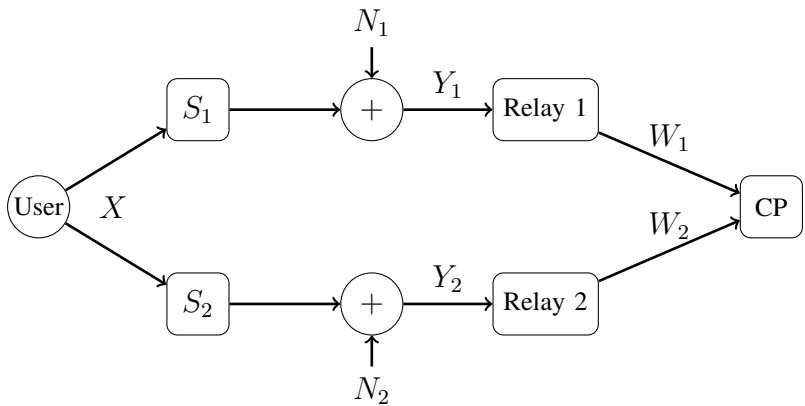
\begin{figure}[ht!]
\centering
\centering
\begin{tikzpicture}[node distance = 0.04\textwidth, auto]
\tikzstyle{neuron} = [circle, draw=black, fill=white, minimum height=0.05\textwidth, inner sep=0pt]
\tikzstyle{rect} = [rectangle, rounded corners, minimum width=0.05\textwidth, minimum height=0.05\textwidth,text centered, draw=black, fill=white]
    \node [neuron] (neuron) {\small{User}};
    \node [right of=neuron, xshift=0.02\textwidth] (X) {$X$};
    \node [above right of=neuron,xshift=0.1\textwidth, yshift=0.05\textwidth, rect] (S1) {$S_1$};
    \node [below right of=neuron, xshift=0.1\textwidth, yshift=-0.05\textwidth, rect] (S2) {$S_2$};
     \node [right of=S1, xshift=0.1\textwidth, neuron] (add1) {$+$};
      \node [above of=add1, yshift=0.03\textwidth] (N1) {$N_1$};
      \node [right of=S2, xshift=0.1\textwidth, neuron] (add2) {$+$};
         \node [below of=add2, yshift=-0.03\textwidth] (N2) {$N_2$};
    \node [right of=add1, xshift=0.1\textwidth, rect] (R1) {\small{Relay 1}};
     \node [right of=add2, xshift=0.1\textwidth, rect] (R2) {\small{Relay 2}};
        \node [right of=neuron, xshift=0.55\textwidth, rect] (D) {\small{CP}};
     \draw [->,line width=1pt] (neuron) -- (S1);
     \draw [->,line width=1pt] (S1) -- (add1);
     \draw [->,line width=1pt] (N1) -- (add1);
     \draw [->,line width=1pt] (add1) -- node[above] {$Y_1$}(R1);
       \draw [->,line width=1pt] (R1) -- node[above] {$W_1$}(D);
     \draw [->,line width=1pt] (neuron) -- (S2);
      \draw [->,line width=1pt] (S2) -- (add2);
       \draw [->,line width=1pt] (N2) -- (add2);
         \draw [->,line width=1pt] (add2) -- node[above] {$Y_2$}(R2);
         \draw [->,line width=1pt] (R2) -- node[above] {$W_2$}(D);
\end{tikzpicture}
\caption{A disaggregated RAN model with a single user and two relays.}  
\label{Block_diagram}
\end{figure}

{
We consider a memoryless Gaussian disaggregated RAN model with oblivious relay processing, as shown in Fig.~\ref{Block_diagram}. In this model, a user communicates with the CP through two relay nodes. 
Each relay node \( k\in \{1,2\} \) connects to the CP via an error-free fronthaul link of finite capacity \( C_k \).  
The relays operate under an oblivious processing constraint, i.e., they lack knowledge of the users' codebooks. This obliviousness is modeled via randomized coding \cite{dandervoich2008communication}, where the user selects codebooks at random without informing the relays, while the CP is aware of the selected codebook.
The user transmits codewords \( X^n \) over Gaussian channels with i.i.d. Rayleigh fading. 
Each relay \( k \) has perfect knowledge of its own channel state sequence $S^n_k$. 
The received signal $(Y_k^n,S_k^n)$ at each relay follows the memoryless channel model:  
\[
    P_{Y_k^n,S_k^n| X^n}(y_k^n,s_k^n|x^n) =\prod_{i=1}^{n} P_{Y_k|X,S_k}(y_{k, i}|x_i,s_{k,i})P_{S_k}(s_{k,i}),
\]
where $P_{Y_k|X,S_k}$ is given by the single-letter model
\begin{equation}\label{obser}
Y_k = S_k X + N_k.
\end{equation}  
Here, \( X \sim {\cal CN}(0, 1) \) represents the channel input, \( S_k \sim {\cal CN}(0, 1) \) 
is the Rayleigh fading coefficient, and \( N_k \sim {\cal CN}(0, \sigma^2) \) denotes additive Gaussian noise at relay \( k \). The channel SNR for relay $k$ is defined as ${\rm SNR} = \frac{1}{\sigma^2}$.
 
In general, each relay collects blocks of length $n$ of the received signal and corresponding channel state $(Y^n_k,S^n_k)$ and encodes them into quantization index $W_k$ in a set of cardinality $2^{n R^{\rm enc}_k}$ 
such that $R^{\rm enc}_k \leq C_k$. The CP, in the presence of the relay messages $(W_1, W_2)$, must recover the information message sent by the user. 

The capacity (under oblivious relay processing and fixed input distribution) of the model in Fig.~\ref{Block_diagram}
is given in \cite[Theorem~$1$]{dandervoich2008communication} and takes on the form
\begin{subequations}\label{IB_problem}
	\begin{align}
	\mathop {\max }\limits_{\{P_{Z_k| Y_k, S_k}: k\in \Kc\}} \;& I(X; Z_{\cal K}) \label{IB_problem_a}\\
	\text{s.t.} \quad\;\;&  I(Y_{\cal T},S_{\cal T}; Z_{\cal T}| Z_{{\cal T}^c}) \leq \sum_{k \in {\cal T}} C_k, ~\forall {\cal T} \subseteq {\cal K}, \label{IB_problem_b}
	\end{align}
\end{subequations}
where we use $\Kc = \{1,2\}$ to denote the set of relay indices, and where 
$Z_{\cal K}$ is such that 
$Z_k \mk (Y_k, S_k) \mk (X, Y_{\Kc\setminus k },S_{\Kc\setminus k })$ 
forms a Markov chain  for all $k \in \Kc$. Notice that the maximization on the conditional marginal distributions
$P_{Z_k| Y_k, S_k}$ in \eqref{IB_problem} and the Markov chain condition indicates the fact that 
each relay is only aware of its own channel state. 
Intuitively, we can identify $I(X; Z_{\cal K})$ with the communication rate and 
$I(Y_{\cal T}, S_{\cal T}; Z_{\cal T}| Z_{{\cal T}^c})$ with the compression rate at the relays. 
Unfortunately, problem \eqref{IB_problem} is quite intractable. 
In the following subsections, we derive a computable 
capacity upper bound and several achievable schemes which provide computable lower bounds. 

\subsection{Upper Bounds} 
\label{ub1}

An obvious upper bound to the capacity in (\ref{IB_problem}) can be obtained by assuming that the 
channel state $\{S_k:k\in \Kc\}$ is known to both relays and the CP. 
This bound, referred to as the ``informed receiver'' upper bound and
denoted by $R^{\rm ub}_0$,  can be obtained by considering the following problem
\begin{subequations}\label{IB_problem_ub}
	\begin{align}
	\mathop {\max }\limits_{\{P_{Z_k| Y_k, S_{\Kc}}:k\in \Kc\}} \;\;\;& I(X; Z_{\cal K}| S_{\cal K}) \label{IB_problem_ub_a}\\
	\text{s.t.} \quad\;\;\;\; &  I(Y_{\cal T}; Z_{\cal T}| Z_{{\cal T}^c}, S_{\cal K}) \leq \sum_{k \in {\cal T}} C_k, ~\forall {\cal T} \subseteq {\cal K}. \label{IB_problem_ub_b}
	\end{align}
\end{subequations}
Notice that when all nodes know the full channel states, as in 
\eqref{IB_problem_ub}, then a {\em parallel channel decomposition} is possible, such that 
for each given state $\{S_k = s_k: k \in \Kc\}$ we have a model with fixed, deterministic and known-to-all states.
Then, the capacity is immediately given as the average over the channel state distribution of the 
capacities for each state, with optimization over the relay message rate allocation functions. 

The parallel channel decomposition motivates us to consider the capacity for 
fixed and deterministic channel state. 
In this case, \( S_{\cal K} \) in \eqref{IB_problem_ub} can be omitted, leading to an alternative and simplified expression for the capacity in \eqref{IB_problem_ub} as \cite{dandervoich2008communication}
\begin{align}\label{eq:equavalent_IB}
    R = \mathop {\max }\limits_{\{P_{Z_k| Y_k}:k\in \Kc\}} \left\{\mathop {\min }\limits_{ {\cal T} \subseteq {\cal K}} \left\{ I(Z_{\Tc^c}; X) + \sum_{k \in {\cal T}} \left(C_k - I(Y_k; Z_k|X)\right)\right\}\right\},
\end{align}
which is amenable for explicit computation. In particular, for the case at hand with $X \sim \Cc\Nc(0, 1)$, the optimal value of problem (\ref{eq:equavalent_IB}) is given by \cite[Theorem~$5$]{dandervoich2008communication} 
\begin{align}\label{R_fixed_rho}
 R (\rho_{\cal K}, C_{\cal K}) = \mathop {\max }\limits_{r_{\cal K} \geq 0} \left\{ \mathop {\min }\limits_{ {\cal T} \subseteq {\cal K}} \left\{ \log \left[ 1 + \sum_{k \in {\cal T}^c} \rho_k \left( 1 - 2^{-r_k} \right) \right] + \sum_{k \in {\cal T}} (C_k - r_k) \right\} \right\},
\end{align}
where for $k \in \Kc$ we define $\rho_k = |s_k|^2/\sigma^2$ to be the channel SNR, 
and $\{r_k\}$  are the optimization variables, with the meaning 
\( r_k = I(Y_k; Z_k|X) \) \cite{dandervoich2008communication}.

For later use, we rewrite \eqref{R_fixed_rho} in a convenient convex optimization form, 
that can be optimally solved using standard convex optimization tools.
Introducing an auxiliary variable $\beta$, $R (\rho_{\cal K}, C_{\cal K})$ in \eqref{R_fixed_rho} 
can be obtained as the optimal value $\beta^*$ of the problem
\begin{subequations}\label{eq_problem}
	\begin{align}
	\mathop {\max }\limits_{{r_{\cal K}}, \beta}\;\;\; & \beta \label{eq_problem_a}\\
	 \text{s.t.} \quad& \log\! \left[\! 1 \!+\! \sum_{k \in {\cal T}^c} \rho_k \left( 1 \!-\! 2^{-r_k} \right) \!\right] \!+\! \sum_{k \in {\cal T}} (C_k \!-\! r_k) \!\geq\! \beta, ~\forall {\cal T} \!\subseteq\! {\cal K}, \label{eq_problem_b}\\
	& r_k \geq 0, ~\forall k \in {\cal K}. \label{eq_problem_c}
	\end{align}
\end{subequations}
\begin{remark} \label{re: operational_meaning}
It should be noted that the achievability of \eqref{R_fixed_rho} can be established (see \cite{dandervoich2008communication}) by letting each relay \( k \) use a random codebook, generated i.i.d.  according to the auxiliary random variable \( Z_k \sim  X + \frac{N_k}{s_k} + Q_k \), where $ \frac{N_k}{s_k} \sim \Cc\Nc(0, \frac{\sigma^2}{|s_k|^2})$, and \( Q_k \sim \Cc\Nc(0,\frac{\sigma^2 2^{-r_k}}{{|s_k|^2}(1 - 2^{-r_k})}) \) for some \( r_k \geq 0 \). {In this setup, each relay generates its quantization index using Wyner-Ziv binning, based on its observation \( Y_k^n \) and its own codebook.}
The optimization problem in \eqref{R_fixed_rho} implies that the optimal codebook design for each relay depends on the SNRs of both relays.
\end{remark}

As said before, the fixed channel state case in \eqref{R_fixed_rho} can be extended to the random i.i.d. channel state 
scenario by using the parallel channel decomposition. Following this approach, 
the informed receiver upper bound $R_0^{\rm ub}$ can be obtained by solving the problem
\begin{subequations}\label{ergodic_problem}
	\begin{align}
	\mathop {\max }\limits_{\{ c_k (\rho_{\Kc}) :k\in \Kc\}} \quad & {\mathbb E} \left[ R (\rho_{\cal K}, \{ c_k (\rho_{\Kc}) \} ) \right] \label{ergodic_problem_a}\\
	\text{s.t.} \quad\; &  {\mathbb E} \left[ c_k (\rho_{{\cal K}}) \right] \leq C_k, ~\forall k \in {\cal K}, \label{ergodic_problem_b}\\
	& c_k (\rho_{{\cal K}}) \geq 0, ~\forall k \in {\cal K}, \label{ergodic_problem_c}
	\end{align}
\end{subequations}
where   $\{\rho_k = {|S_k|^2}/{\sigma^2}:k\in \Kc\}$ are   random variables  and the expectation is taken over their joint distribution.
The functions in \( \{c_k (\rho_{\Kc}):k\in \Kc \}\) specify the allocation of the corresponding total link capacities \( \{C_k \} :k\in \Kc \) 
over the parallel channels.  Although the optimal value of $R (\rho_{\cal K}, C_{\cal K})$ for each fixed $\{\rho_k :k\in \Kc \}$ in (\ref{R_fixed_rho}) can be obtained by solving its equivalent and convex transformation (\ref{eq_problem}), this does not yield a closed-form or easily computable form of the solution of  \eqref{ergodic_problem} because the optimization over the space of rate allocation functions \( \{c_k (\rho_{\Kc}) :k\in \Kc \}\) is far from obvious.  Therefore, unlike similar problems such as \cite[(6)]{caire2018information}  and \cite[(4)]{IBxu}, which admit closed-form solutions, the problem \eqref{ergodic_problem} is still difficult to handle analytically  or even numerically.

To obtain an easy-to-evaluate upper bound, we further relax the problem by assuming that the relays can cooperate. 
This turns the network into a system with a transmitter and a two-antenna oblivious relay, connected to the CP via 
a link with capacity $\sum_{k \in {\cal K}} C_k$. 
This bound, referred to as the ``cooperative informed receiver'' upper bound and denoted by $R^{\rm ub}$, 
is the solution of the problem
\begin{subequations}\label{IB_problem_ub2}
	\begin{align}
	\mathop {\max }\limits_{\left\{P_{Z_{k}| Y_{k}, S_{\Kc}} :k\in \Kc \right\}} \quad & I(X; Z_{\cal K}| S_{\cal K}) \label{IB_problem_ub2_a}\\
	\text{s.t.} \;\quad\quad\;\;\; &  I(Y_{\cal K}; Z_{\cal K}| S_{\cal K}) \leq \sum_{k \in {\cal K}} C_k. \label{IB_problem_ub2_b}
	\end{align}
\end{subequations}
Defining the channel state vector $\bm S = [S_1, S_2]^\transp \in \mathbb{C}^{2 \times 1}$ and noticing that  
the matrix $\bm S \bm S^\herm$ has rank 1 and single positive eigenvalue $\lambda$ with probability density function (pdf) \cite[(A17)]{IBxu}
\begin{equation}\label{pdf_lambda}
f_{\lambda} (\lambda) = \lambda e^{-\lambda}, ~\forall \lambda \geq 0, 
\end{equation}
using \cite[Theorem~1]{IBxu}, the optimal objective value of problem (\ref{IB_problem_ub2}) is given by
\begin{equation}\label{R_up_KM}
R^{\text {ub}} = \int_{\nu \sigma^2}^{\infty} \left[ \log \left(1 + \frac{\lambda}{\sigma^2} \right) - \log (1 + \nu)\right] f_\lambda (\lambda) d \lambda,
\end{equation}
where $\nu$ is chosen such that the following bottleneck constraint is met
\begin{equation}\label{bottle_constr_KM}
\int_{\nu \sigma^2}^{\infty} \left( \log \frac{\lambda}{\nu \sigma^2} \right) f_\lambda (\lambda) d \lambda = \sum_{k \in {\cal K}} C_k.
\end{equation}
The following lemma is a ``sanity check'', considering the limit case of $R^{\rm ub}$ when the noise power goes to zero or the  capacity constraint goes to infinity. 
\begin{lemma}[Limiting Behavior of \( R^{\rm ub} \)]\label{re:R_ub}
   Consider the cooperative informed receiver upper bound \( R^{\rm ub} \) in \eqref{R_up_KM}. 
    \begin{itemize}
        \item When the total capacity \( \sum_{k \in {\cal K}} C_k \) is fixed and SNR goes to infinity ($\sigma \rightarrow 0$), 
        \( R^{\rm ub} \) asymptotically approaches \( \sum_{k \in {\cal K}} C_k \).
        \item When SNR is fixed and \( \sum_{k \in {\cal K}} C_k \) goes to infinity, we have
   \begin{align}
       R^{\rm ub} &\rightarrow I(X; Y_{\Kc}, S_{\Kc}) \nonumber \\
       &= \int_{0}^{\infty} \left[ \log \left(1 + \frac{\lambda}{\sigma^2} \right)\right] f_\lambda (\lambda) \, d \lambda.
   \end{align}
\end{itemize}
\end{lemma}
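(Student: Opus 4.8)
The plan is to reduce both limiting statements to elementary monotonicity facts about the implicit equation \eqref{bottle_constr_KM} defining $\nu$, and then to pass the limit under the integral sign in \eqref{R_up_KM} by a monotone/dominated convergence argument. First I would reparametrize: set $t \eqdef \nu\sigma^2$ and $g(t) \eqdef \int_t^\infty \log(\lambda/t)\,f_\lambda(\lambda)\,d\lambda$, so that \eqref{bottle_constr_KM} becomes $g(t) = \sum_{k\in\Kc} C_k$. Using $f_\lambda(\lambda)=\lambda e^{-\lambda}$ one checks that $g$ is continuous and strictly decreasing on $(0,\infty)$ (indeed $g'(t) = -\tfrac1t\int_t^\infty f_\lambda(\lambda)\,d\lambda<0$), with $\lim_{t\downarrow 0}g(t)=\infty$ and $\lim_{t\to\infty}g(t)=0$; hence $g$ is a bijection of $(0,\infty)$ onto itself and $t = g^{-1}\!\big(\sum_{k}C_k\big)$ is well defined. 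The crucial observation is that $g$ does not involve $\sigma$: the product $\nu\sigma^2$ depends only on the total fronthaul budget $\sum_k C_k$. I would also rewrite \eqref{R_up_KM}, after substituting $\nu=t/\sigma^2$, as $R^{\rm ub} = \int_{t}^{\infty} \log\!\big(\tfrac{\sigma^2+\lambda}{\sigma^2+t}\big)\,f_\lambda(\lambda)\,d\lambda$, whose integrand is nonnegative for $\lambda\ge t$.

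For the first bullet ($\sum_k C_k$ fixed, $\sigma\to 0$), the point is that $t = t^* \eqdef g^{-1}\!\big(\sum_k C_k\big)$ is then a fixed constant independent of $\sigma$, while $\nu=t^*/\sigma^2\to\infty$. For each $\lambda\ge t^*$ the ratio $\tfrac{\sigma^2+\lambda}{\sigma^2+t^*}$ is nonincreasing in $\sigma^2$ and tends to $\lambda/t^*$ as $\sigma\downarrow 0$, so the integrand increases pointwise to $\log(\lambda/t^*)\,f_\lambda(\lambda)$ and is dominated by it; this dominating function is $f_\lambda$-integrable precisely because $g(t^*)=\sum_k C_k<\infty$. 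Monotone (equivalently, dominated) convergence then gives $\lim_{\sigma\to 0}R^{\rm ub} = \int_{t^*}^{\infty}\log(\lambda/t^*)\,f_\lambda(\lambda)\,d\lambda = g(t^*) = \sum_{k\in\Kc}C_k$.

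For the second bullet ($\sigma$ fixed, $\sum_k C_k\to\infty$), I would first note that $\lim_{t\downarrow 0}g(t)=\infty$ together with monotonicity forces $t=\nu\sigma^2\to 0$, hence $\nu\to 0$. Then I would sandwich $R^{\rm ub}$. Writing $I^* \eqdef \int_0^\infty \log(1+\lambda/\sigma^2)\,f_\lambda(\lambda)\,d\lambda$, which is finite since $\log(1+\lambda/\sigma^2)$ grows only like $\log\lambda$, the form \eqref{R_up_KM} gives $R^{\rm ub}\le I^*$ (because $\log(1+\nu)\ge 0$ and the domain of integration shrinks) and $R^{\rm ub}\ge \int_{\nu\sigma^2}^\infty \log(1+\lambda/\sigma^2)\,f_\lambda(\lambda)\,d\lambda - \log(1+\nu)$. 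As $\nu\to 0$ the first lower-bound term increases to $I^*$ (monotone convergence, since the lower limit $\nu\sigma^2\to 0$) and $\log(1+\nu)\to 0$, so $R^{\rm ub}\to I^*$. It remains to identify $I^*$ with $I(X;Y_\Kc,S_\Kc)$: since $X\sim\Cc\Nc(0,1)$ is independent of $S_\Kc$ and, conditioned on $S_\Kc$, the channel $Y_\Kc = S_\Kc X + N_\Kc$ is a SIMO Gaussian channel whose squared channel-gain norm is $\|S_\Kc\|^2 = \lambda$, one has $I(X;Y_\Kc,S_\Kc) = I(X;Y_\Kc\mid S_\Kc) = \mathbb{E}\!\big[\log(1+\lambda/\sigma^2)\big] = I^*$, as claimed. (Intuitively, as the fronthaul becomes unconstrained the bottleneck in \eqref{IB_problem_ub2} is inactive and $R^{\rm ub}$ reaches $I(X;Z_\Kc\mid S_\Kc)$ with $Z_\Kc = Y_\Kc$.)

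These arguments are short; the only place that requires genuine care is the legitimacy of interchanging limit and integral, and the monotonicity and integrability bounds set up in the reparametrization step are exactly what licenses the convergence theorems used afterward. Accordingly, I expect the main — and fairly modest — obstacle to be making those bounds precise; everything downstream is routine bookkeeping.
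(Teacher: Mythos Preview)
Your proposal is correct and in fact more careful than the paper's own argument. The paper proceeds heuristically: for the first bullet it observes that as $\sigma\to 0$ the constraint \eqref{bottle_constr_KM} forces $\nu$ large, and then replaces $\log(1+\lambda/\sigma^2)$ and $\log(1+\nu)$ by $\log(\lambda/\sigma^2)$ and $\log\nu$ respectively, so that the integrand in \eqref{R_up_KM} collapses to $\log(\lambda/(\nu\sigma^2))$ and the integral equals $\sum_k C_k$ by \eqref{bottle_constr_KM}; for the second bullet it simply asserts that $\nu\to 0$ and reads off the limit. Your route is genuinely different in one respect: by reparametrizing via $t=\nu\sigma^2$ you notice that the constraint $g(t)=\sum_k C_k$ does not involve $\sigma$ at all, so $t$ is \emph{exactly} constant along the limit $\sigma\to 0$, not merely asymptotically so. This turns the first bullet into a clean monotone convergence statement with a fixed domain of integration, sidestepping the informal ``$\log(1+x)\approx\log x$'' step the paper relies on. The sandwich argument you give for the second bullet and the identification of $I^*$ with $I(X;Y_\Kc,S_\Kc)$ are also more explicit than what the paper records. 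In short, both approaches reach the same conclusion, but yours buys rigor at essentially no extra cost.
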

\begin{proof}
See Appendix \ref{R_ub}.
\end{proof}

\subsection{Evaluation of the Cooperative Informed Receiver Upper Bound}
\label{baselines}

The cooperative informed receiver upper bound $R^{\rm ub}$ obtained from \eqref{IB_problem_ub2} is looser than the informed receiver upper bound $R^{\rm ub}_0$ obtained from \eqref{ergodic_problem}.
Given the difficulty of handling $R_0^{\rm ub}$ directly, in this subsection
we derive a lower bound and an approximation to $R_0^{\rm ub}$ that can be easily computed. 
It is important to note that the lower bound (as well as the approximation) to $R^{\rm ub}_0$ is not a valid capacity upper bound, 
since it is, by definition, a lower bound to an upper bound. 

\subsubsection{\bf Lower bound to $R^{\rm ub}_0$}
\label{ch:R^{ub3}}
A lower bound to $R^{\rm ub}_0$ can be obtained by setting the capacity allocation functions 
 \( c_k (\rho_{{\cal K}}) = C_k \) regardless of channel states. This choice certainly satisfies 
 the constraints \eqref{ergodic_problem_b} and \eqref{ergodic_problem_c} and therefore is feasible, but generally suboptimal. 
 The lower bound is then given by
\begin{equation}
    \check{R}^{\rm ub} = {\mathbb E} \left[ R (\rho_{\cal K}, C_{\cal K} ) \right], \label{ergodic_problem_aa}
\end{equation}
where \( R (\rho_{\cal K}, C_{\cal K} ) \) is defined in \eqref{R_fixed_rho} and can be obtained by solving the convex problem \eqref{eq_problem}.  The expectation in \eqref{ergodic_problem_aa} can be estimated by Monte Carlo (MC) simulation by sampling numerous instances of instantaneous channel SNRs \( \rho_{\cal K} \).

\subsubsection{\bf Approximation to \( R^{\rm ub}_0 \)} 
We apply the stochastic optimization framework based on the DPP approach  \cite{georgiadis2006resource, neely2010stochastic}
to handle the expectation constraints in  \eqref{ergodic_problem}. 
This approach transforms the expectation minimization into an opportunistic per-slot optimization 
problem, leveraging only current observations while avoiding long-term dependencies. 
Problem \eqref{ergodic_problem} can be reformulated as  
\begin{subequations}\label{ergodic_problem_LB} 
	\begin{align}
     \mathop{ \min}_{\{c_{k}(t, \rho_{\Kc} (t))\} :k\in \Kc, t \in \{0, 1, \cdots, T-1\} } \quad \;\;&-\lim_{T\rightarrow \infty}\frac{1}{T}\sum_{t =0}^{T-1} {\mathbb E} [ R (\rho_{\cal K}(t), \{c_{k}(t, \rho_{\Kc} (t))\} ) ] \label{ergodic_problem_a_d}\\
	\text{s.t.} \quad\quad~~~ & \lim_{T\rightarrow \infty}\frac{1}{T}\sum_{t =0}^{T-1} {\mathbb E}[c_{k}(t, \rho_{\Kc} (t))] \leq C_k, ~\forall k \in {\cal K}, \label{ergodic_problem_b_d}\\
	 &0 \leq c_k (t, \rho_{\Kc}(t)) \leq C_{\max} , ~\forall k \in {\cal K},~ t \in \{0, 1, \cdots, T-1\}, \label{ergodic_problem_c_d}
	\end{align}
\end{subequations}
where \( t \) denotes the time index, and the channel states are i.i.d. over time.
At each time step \( t \), the allocated capacity \( c_k(t, \rho_{\Kc}(t)) \) is dynamically optimized based on the current user-relay channel SNRs \( \rho_{\Kc}(t) \).  
In accordance with the standard requirements outlined in \cite{georgiadis2006resource},  the optimization variable \( c_k(t, \rho_{\Kc}(t)) \) must be bounded. To ensure this, an additional constraint \eqref{ergodic_problem_c_d} is introduced, which sets an upper bound on \( c_k(t, \rho_{\Kc}(t)) \) with a predefined parameter \( C_{\max} \geq \max\{C_k : k \in \Kc\} \). This parameter \( C_{\max} \) should be chosen large enough such that the approximation accuracy is not affected.
In particular, \( c_k(t, \rho_{\Kc}(t)) \) may occasionally exceed \( C_k \) as long as its long-term average 
expectation remains within \( C_k \). 

Let  \( V > 0 \) be a given control parameter of the DPP algorithm and let
 \( \{Q_k(t)\} \) be iteratively updated weights usually referred to as {\em virtual queue backlogs} (see
\cite{georgiadis2006resource, neely2010stochastic}), with initial condition $Q_k(t=0) =0, \forall k\in \Kc$.  
Along the ``time'' iterations $t = 0, 1, 2, \ldots$, generate a sequence of i.i.d. channel states 
$\rho_\Kc(t)$. 
The allocated `instantaneous'' capacities  \( \{c_k(t, \rho_{\Kc}(t)):k\in \Kc, t \in \{0, 1, \cdots, T-1\}\} \) at time iteration $t$ are given by the optimal values
$\{c_k^*(t):k\in \Kc, t \in \{0, 1, \cdots, T-1\}\}$  of the variables $\{c_k:k\in \Kc\}$ of the convex optimization problem:
\begin{subequations}\label{eq:solve_problem}
    	\begin{align}
		\mathop{ \min}_{ c_{\Kc}, \beta, r_{\cal K}} \quad\;\; &\sum_{k \in {\cal K}} c_k Q_k(t) -V \beta  \\
	\text{s.t.} ~\quad~~ &0 \leq c_k \leq C_{\max}, \forall k \in \Kc, \\
 & \log \!\!\left[\! 1 \!+\!\!\! \sum_{k \in {\cal T}^c} \rho_k(t) \left( 1 \!-\! 2^{-r_k} \right) \!\right] \!\!+\!\! \sum_{k \in {\cal T}} (c_k  \!-\! r_k) \!\geq\! \beta, \forall {\cal T} \!\subseteq\! {\cal K}, r_k \geq 0, ~\forall k \in {\cal K},
	\end{align}
\end{subequations}
The virtual queues evolve over time according to: 
\begin{align}
    Q_k(t+1) = \max \left[ Q_k(t) + c_k(t, \rho_{\Kc}(t)) - C_k, 0 \right], \quad \forall k \in \Kc, \;\; t = 0,1,2 \ldots \label{eq:virtual_queue}
\end{align}
Notice that the queue stability condition imposes that the average arrival rate in the $k$-th queue, given by 
$\lim_{T\rightarrow \infty}\frac{1}{T}\sum_{t =0}^{T-1} c_k(t, \rho_{\Kc}(t))$, 
must be smaller than the deterministic service rate equal to $C_k$. Hence, queue stability effectively imposes 
the average constraints \eqref{ergodic_problem_b}.  
The DPP algorithm for stochastic optimization in the presence of average constraints can be interpreted as the maximization of the time-averaged objective function along the trajectory of a dynamic system, subject to a queue-stability condition, 
where we have one virtual queue for each average constraint. 
Also, the optimal value $\beta^*(t)$ of the variable $\beta$ in  
problem~\eqref{eq:solve_problem}, yields the instantaneous 
value of the rate, whose long-term average yields the sought approximation, i.e., 
\begin{align}
    \widetilde{R}^{\rm ub} = \lim_{T \rightarrow \infty} \frac{1}{T} \sum_{t=0}^{T-1} \beta^*(t).  \label{ziobaba}
\end{align}
For given penalty parameter $V$ and finite $T$, this approximation is  a random 
variable (with some abuse of notation also denoted by $\widetilde{R}^{\rm ub}$ ), 
whose average can be evaluated by MC simulation over many runs. 
The convergence in probability of the limit in \eqref{ziobaba} is ensured by the fact that the 
channel states are i.i.d. over time and follows from the general theory developed in \cite{georgiadis2006resource, neely2010stochastic}. In other words, for sufficiently large $T$ the quantity \eqref{ziobaba} is {\em self-averaging}
and converges in probability to the limit w.r.t. of its expectation (w.r.t. the random sequence of channel states), such that MC simulation over many runs is not needed. The main steps to compute the approximation $\widetilde{R}^{\rm ub}$ are 
summarized in Algorithm \ref{alg:loop}.

\begin{algorithm}[H] 
\caption{Informed receiver upper bound approximation $\widetilde{R}^{\rm ub}$}
\label{alg:loop}
\begin{algorithmic}[1]
\State{\textbf{Initialization}: Virtual Queues $\{Q_k(0)\}$, penalty weight $V$, maximum iterations $T$, 
link capacities $\{C_k\}$}
  \For{$t = 0: T-1$}  
       \State{Generate $\{\rho_k(t) : k \in \cal{K}\}$ according to their distribution;}
        \State{Obtain optimal variable $\{c^*_k : k \in \Kc\}$  and $\beta^*(t)$ by solving \eqref{eq:solve_problem};} 
        \State {\textbf{Update} virtual queues $Q_k(t+1)$ according to \eqref{eq:virtual_queue}.}
    \EndFor
    \State {\textbf{Return}: $\widetilde{R}^{\rm ub} = \frac{1}{T} \sum_{t=0}^{T-1} \beta^*(t)$ .}
\end{algorithmic}
\end{algorithm}

The following theorem quantifies the gap between the approximation \( \widetilde{R}^{\rm ub} \) and the 
value \( R^{\rm ub}_0 \) of the original optimization problem \eqref{ergodic_problem}.  

\begin{theorem}\label{th:optimal_analysis}
Suppose the channel states \( \{S_k(t):k \in \Kc, t\in \{0,...,T-1\} \} \) are i.i.d. processes. 
Given the constants \( C_{\max} \) and penalty parameter \( V \), let  
\[
B = \frac{1}{K} \sum_{k\in \Kc} \max\{ C_k^2, (C_{\max} - C_k)^2\}.
\]  
Then, as \( T \rightarrow \infty \), the approximation satisfies  
\begin{align}\label{eq:bound_approximation}
\widetilde{R}^{\rm ub} \geq R^{\rm ub}_0  - \frac{B}{V}.
\end{align}
\end{theorem}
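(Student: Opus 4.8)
The plan is to carry out the standard Lyapunov drift-plus-penalty analysis of \cite{neely2010stochastic,georgiadis2006resource}, specialized to the virtual-queue dynamics \eqref{eq:virtual_queue} and the per-slot convex program \eqref{eq:solve_problem}. First I would introduce the quadratic Lyapunov function $L(t) = \frac{1}{2}\sum_{k\in\Kc} Q_k(t)^2$ with one-slot conditional drift $\Delta(t) = {\mathbb E}[L(t+1) - L(t) \mid Q_\Kc(t)]$. Squaring \eqref{eq:virtual_queue} and using $(\max[a,0])^2 \le a^2$ gives, for each $k$, $Q_k(t+1)^2 - Q_k(t)^2 \le (c_k(t,\rho_\Kc(t)) - C_k)^2 + 2 Q_k(t)(c_k(t,\rho_\Kc(t)) - C_k)$, hence
\begin{align*}
L(t+1) - L(t) \le \tfrac12 \sum_{k\in\Kc}\big(c_k(t,\rho_\Kc(t)) - C_k\big)^2 + \sum_{k\in\Kc} Q_k(t)\big(c_k(t,\rho_\Kc(t)) - C_k\big).
\end{align*}
Since $0 \le c_k(t,\rho_\Kc(t)) \le C_{\max}$ by \eqref{ergodic_problem_c_d} and $0 \le C_k \le C_{\max}$, we have $(c_k(t,\rho_\Kc(t)) - C_k)^2 \le \max\{C_k^2,(C_{\max}-C_k)^2\}$; as $|\Kc| = 2$, the first sum is at most $\tfrac12\sum_{k\in\Kc}\max\{C_k^2,(C_{\max}-C_k)^2\} = B$, which is how the constant $B$ of the statement arises.

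Next I would form the drift-plus-penalty quantity by subtracting $V\,{\mathbb E}[\beta^*(t)\mid Q_\Kc(t)]$, and exploit that for each realized state $\rho_\Kc(t)$ the triple $(c^*_\Kc(t),\beta^*(t),r^*_\Kc)$ produced by Algorithm~\ref{alg:loop} minimizes $\sum_{k\in\Kc} c_k Q_k(t) - V\beta$ over the feasible set of \eqref{eq:solve_problem}. Fix $\epsilon > 0$ and let $\rho_\Kc \mapsto c^{\diamond}_\Kc(\rho_\Kc)$ be an $\epsilon$-optimal stationary (state-only) rate-allocation policy for \eqref{ergodic_problem} --- a measurable map with $0 \le c^\diamond_k(\rho_\Kc) \le C_{\max}$, ${\mathbb E}[c^\diamond_k(\rho_\Kc)] \le C_k$, and ${\mathbb E}[R(\rho_\Kc, c^\diamond_\Kc(\rho_\Kc))] \ge R^{\rm ub}_0 - \epsilon$. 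Setting $\beta^\diamond = R(\rho_\Kc(t), c^\diamond_\Kc(\rho_\Kc(t)))$ with the maximizing $r^\diamond_\Kc$ from \eqref{R_fixed_rho} makes $(c^\diamond_\Kc(\rho_\Kc(t)), \beta^\diamond, r^\diamond_\Kc)$ feasible for \eqref{eq:solve_problem}; hence by per-slot optimality, and then taking the conditional expectation over $\rho_\Kc(t)$ (independent of $Q_\Kc(t)$),
\begin{align*}
{\mathbb E}\!\Big[\textstyle\sum_{k\in\Kc} Q_k(t) c^*_k(t) - V\beta^*(t)\,\Big|\, Q_\Kc(t)\Big] \le \sum_{k\in\Kc} Q_k(t)\,{\mathbb E}[c^\diamond_k(\rho_\Kc)] - V\,{\mathbb E}[R(\rho_\Kc,c^\diamond_\Kc(\rho_\Kc))] \le \sum_{k\in\Kc} Q_k(t) C_k - V(R^{\rm ub}_0 - \epsilon).
\end{align*}
Substituting this into the drift-plus-penalty inequality, the $\sum_k Q_k(t) C_k$ terms cancel, leaving $\Delta(t) - V\,{\mathbb E}[\beta^*(t)\mid Q_\Kc(t)] \le B - V(R^{\rm ub}_0 - \epsilon)$.

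The remaining step is a telescoping sum: taking total expectations, summing over $t = 0,\dots,T-1$, and using $L(0) = 0$ (from $Q_k(0) = 0$) together with $L(T) \ge 0$ gives $\tfrac{1}{T}\sum_{t=0}^{T-1}{\mathbb E}[\beta^*(t)] \ge R^{\rm ub}_0 - \epsilon - B/V$. Letting $\epsilon \downarrow 0$ and then $T \to \infty$, and invoking the self-averaging property of \cite{georgiadis2006resource,neely2010stochastic} valid because the channel states are i.i.d. over time --- by which $\widetilde{R}^{\rm ub}$ in \eqref{ziobaba} converges in probability to $\lim_{T}\tfrac1T\sum_{t}{\mathbb E}[\beta^*(t)]$ --- yields $\widetilde{R}^{\rm ub} \ge R^{\rm ub}_0 - B/V$.

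I expect the main obstacle to be justifying the existence of the $\epsilon$-optimal stationary comparison policy $\{c^\diamond_k\}$ that \emph{additionally} respects the truncation bound $c^\diamond_k \le C_{\max}$, so that it is admissible in \eqref{eq:solve_problem}: this is exactly where the hypothesis that $C_{\max} \ge \max_k C_k$ is ``chosen large enough'' is used --- because the rate $R(\rho_\Kc,\cdot)$ saturates at $I(X;Y_\Kc,S_\Kc) < \infty$, per-parallel-channel allocations above some finite threshold are never beneficial, so truncating at $C_{\max}$ is harmless. A secondary, purely technical, point is the passage from the time-average of $\beta^*(t)$ to the time-average of ${\mathbb E}[\beta^*(t)]$, which rests on stability/ergodicity of the joint (virtual-queue, state) process and which I would quote from \cite{georgiadis2006resource,neely2010stochastic} rather than reprove.
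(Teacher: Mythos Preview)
Your proposal is correct and follows essentially the same Lyapunov drift-plus-penalty route as the paper: the same quadratic Lyapunov function, the same squaring of \eqref{eq:virtual_queue} to bound the drift, and the same computation of $B$ (noting $|\Kc|=2$ so $\tfrac12\sum_k = \tfrac{1}{K}\sum_k$). The only difference is that where the paper simply invokes Theorem~4.2 of \cite{neely2010stochastic} for the $O(1/V)$ optimality gap, you spell out that argument explicitly via the $\epsilon$-optimal stationary comparison policy and telescoping; this is more self-contained but not a different method.
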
  

\begin{proof}
    See Appendix \ref{sec:DDP_theorem_proof}.
\end{proof}

From Theorem \ref{th:optimal_analysis} we notice that 
the approximation error is inversely proportional to \( V \), ensuring that a sufficiently large \( V \) and $T$ 
can provide a very accurate approximation to the informed receiver upper bound.

\subsection{Achievable Schemes under Rayleigh Fading Channels}
\label{achiev_schems}
In this subsection, we present three achievable schemes under Rayleigh fading channels by employing explicit estimation and compression strategies at the relays. Each scheme provides a capacity lower bound and is then compared to the analytically proposed upper bound.  
\subsubsection{\bf Quantized channel inversion (QCI) scheme}
\label{QCI_scheme}

In our first scheme, each relay first gets an estimate of the channel input using channel inversion, and then transmits the quantized noise levels as well as the quantized noisy signal to the CP.
In particular, using channel inversion to $Y_k$, i.e., multiplying $Y_k$ by $\frac{S_k^*}{|S_k|^2}$, we get
\begin{align}\label{X_tilde}
{\widetilde X}_k &= X + \frac{S_k^*}{|S_k|^2} N_k \nonumber \\
&\triangleq X + \sqrt{\xi_k} N_k',
\end{align}
where $\xi_k = |S_k|^{-2}$, $N_k' = e^{-j \phi_k} N_k$, and $\phi_k$ denotes the phase of channel state $S_k$.
Due to the fact that the noise $N_k$ is rotationally invariant, $N_k'$ has the same statistics as $N_k$, i.e., $N_k' \sim {\cal CN}(0, \sigma^2)$.
We fix a finite grid of $J$ positive quantization points ${\cal B} = \{ b_1, \cdots, b_J \}$, where $b_1 \leq b_2 \leq \cdots \leq b_{J-1} < b_J$, $b_J = + \infty$, and define the following quantization operation by mapping $\xi_k$ to the closest upper quantization level:
\begin{equation}\label{ceiling_2}
\ceil[\big]{\xi_k}_{\cal B} = \min_{b \in {\cal B}} \{ \xi_k \leq b  \}.
\end{equation}
Then, each relay forces the channel (\ref{X_tilde}) to belong to a finite set of quantized noise levels by adding artificial noise, i.e., by introducing physical degradation as follows
\begin{align}\label{X_hat}
{\widehat X}_k & = {\widetilde X}_k + \sqrt{\ceil[\big]{\xi_k}_{\cal B} - \xi_k} N_k'' \nonumber\\
& = X + \sqrt{\xi_k} N_k' + \sqrt{\ceil[\big]{\xi_k}_{\cal B} - \xi_k} N_k'',
\end{align}
where $N_k'' \sim {\cal CN}(0, \sigma^2)$ is a locally generated noise independent of everything else.
Since the relay $k$ knows $\xi_k$ in each channel realization, (\ref{X_hat}) is a Gaussian channel with noise power $\ceil[\big]{\xi_k}_{\cal B} \sigma^2$.
To evaluate the achievable rate, we denote the quantized instantaneous SNR of channel (\ref{X_hat}) when $\ceil[\big]{\xi_k}_{\cal B} = b_{j_k}$ by
\begin{equation}\label{rho_hat}
{\widehat \rho}_{k, j_k} = \frac{1}{b_{j_k} \sigma^2}, ~\forall k \in {\cal K}, j_k \in {\cal J},
\end{equation}
where ${\cal J} = \{ 1, \cdots, J \}$, and define the probabilities
\begin{equation}\label{P_jk}
{\widehat P}_{k, j_k} = {\mathbb{P}} \left\{ \ceil[\big]{\xi_k}_{\cal B} = b_{j_k} \right\}, ~\forall j_k \in {\cal J}.
\end{equation}
From \cite[Theorem 5.3.1]{cover2012elements}, it is known that the minimum number of quantization bits necessary for compressing $\ceil[\big]{\xi_k}_{\cal B}$ and transmitting it to the CP is 
\begin{align}\label{H_k}
{\widehat H}_k &= H(\ceil[\big]{\xi_k}_{\cal B}) \nonumber \\
&=  - \sum_{j_k = 1}^J {\widehat P}_{k, j_k} \log {\widehat P}_{k, j_k}.
\end{align}
{To encode a sequence of $\widehat{X}_k$ of length $n$, we allocate $n \widehat{H}_k$ bits for transmitting the sequence of the quantized noise variance and the remaining $n(C_k - \widehat{H}_k)$ bits for encoding the vector quantization index of $\widehat{X}_k^n$. 
Let $\widehat{R}_{j_1, j_2}$ denote the achievable rate when $\ceil[\big]{\xi_1}_{\cal B} = b_{j_1}$ and $\ceil[\big]{\xi_2}_{\cal B} = b_{j_2}$, and let $\widehat{c}_{k, j_k}$ denote the allocated link rate for vector quantization index when $\ceil[\big]{\xi_k}_{\cal B} = b_{j_k}$. 
Since the CP receives both sequences of 
quantization noise variances from the two relays, it can group the corresponding signal samples
according to the joint (quantized) channel state $(j_1, j_2)$. Hence, 
the parallel channel decomposition over the quantized channel states can be applied to obtain the 
achievable rate for the QCI scheme, denoted as $R^{\rm qci}$. This is given as the 
optimal objective value of the problem
\begin{subequations}\label{ergodic_QCI}
	\begin{align}
	\mathop {\max }\limits_{\{\widehat{c}_{k, j_{k}}, \widehat{r}_{k, j_k}: j_k \in \mathcal{J}, k \in \cal{K}\}}   \quad &  \sum_{j_1=1}^J \sum_{j_2=1}^J \widehat{P}_{1, j_1}  \widehat{P}_{2, j_2} \widehat{R}_{j_1, j_2} \\
	\text{s.t.} \;\;\;\;\;\;\quad \quad&   \sum_{j_k =1}^J \widehat{c}_{k, j_k} \widehat{P}_{k, j_k}\leq C_k - \widehat{H}_k, ~\forall k \in {\cal K} \\
	& \widehat{c}_{k, j_k} \geq 0, ~\forall k \in {\cal K}, ~ \forall j_k \in \mathcal{J},
	\end{align}
\end{subequations}
where the optimization variables include the rate allocation $\widehat{c}_{k, j_k}$ and intermediate variables $\widehat{r}_{k, j_k}$. In particular,  $\widehat{R}_{j_1, j_2}$ is the achievable rate when $\ceil[\big]{\xi_1}_{\cal B} = b_{j_1}$ and 
$\ceil[\big]{\xi_2}_{\cal B} = b_{j_2}$, is given by 
\begin{align}\label{eq:QCI_R_j1_j2}
    \widehat{R}_{j_1, j_2} =  \mathop {\min }\limits_{ {\cal T} \subseteq {\cal K}} \left\{ \log \left[ 1 + \sum_{k \in {\cal T}^c} {\widehat \rho}_{k, j_k} \left( 1 - 2^{-\widehat{r}_{k, j_k}} \right) \right] + \sum_{k \in {\cal T}} (\widehat{c}_{k, j_k} - \widehat{r}_{k, j_k}) \right\},
\end{align} 
Notice that in this scheme each relay uses only its own channel state information knowledge, while all the values of 
the intermediate variables are determined by the quantized levels and their probabilities, which are known since the channel state statistics are known. Therefore, the scheme is compliant with the requirement of local channel state information at each relay.
Combining \eqref{ergodic_QCI} and \eqref{eq:QCI_R_j1_j2}, by incorporating auxiliary random variables $\widehat{\beta}_{j_1, j_2}$, the achievable rate for the QCI can be reformulated as the convex optimization problem
\begin{subequations}\label{ergodic_QCI_2}
	\begin{align}
	\max_{\{ \widehat{\beta}_{j_1, j_2} : j_1, j_2 \in \Jc \}, \{\widehat{c}_{k, j_{k}}, \widehat{r}_{k, j_k}:j_k \in \mathcal{J}, k \in \cal{K}\}} \quad &  \sum_{j_1=1}^J \sum_{j_2=1}^J \widehat{P}_{1, j_1}  \widehat{P}_{2, j_2} \widehat{\beta}_{j_1, j_2} \\
	\text{s.t.} \;\;\;\;\;\quad \quad \quad \quad \quad&   \sum_{j_k =1}^J \widehat{c}_{k, j_k} \widehat{P}_{k, j_k}\leq C_k - \widehat{H}_k, ~\forall k \in {\cal K}, \\
	& \widehat{c}_{k, j_k} \geq 0, ~\forall k \in {\cal K}, ~ \forall j_k \in \mathcal{J},\\
    & \log \left[ 1 + \sum_{k \in {\cal T}^c} {\widehat \rho}_{k, j_k} \left( 1 - 2^{-\widehat{r}_{k, j_k}} \right) \right] + \sum_{k \in {\cal T}} (\widehat{c}_{k, j_k} - \widehat{r}_{k, j_k}) \leq \widehat{\beta}_{j_1, j_2},  \nonumber \\
    &\quad \quad \quad \quad \quad \quad \quad \quad \quad \quad 
 \forall \Tc \subseteq{\Kc}, \forall j_k \in \Jc, \forall k \in \Kc.
	\end{align}
\end{subequations}
}

\subsubsection{\bf Truncated channel inversion (TCI) scheme}
\label{TCI_scheme}

In the second scheme, we consider a threshold $S_{\text {th}}$ on the magnitude of the state $S_k$.
{Specifically, relay $k$ selects all symbols $\widetilde{X}_k$ in \eqref{X_tilde}, which satisfy $|S_k| \geq S_{\text {th}}$, collects them into a vector, quantizes this vector and transmits the quantization index to the CP. In addition, the relay encodes and transmits a binary selection sequence to indicate whether each symbol was selected.
}

To derive an achievable rate for this scheme, we first define the probabilities of selection as
\begin{equation}\label{Prob_TCI}
{\widetilde P}_k = {\mathbb{P}} \left\{ |S_k| \geq S_{\text {th}} \right\}, ~\forall k \in {\cal K},
\end{equation}
and denote the entropy of the selection sequence as
\begin{align}\label{sigma_rho_tilde}
{\widetilde H}_k & = - {\widetilde P}_k \log {\widetilde P}_k - (1 - {\widetilde P}_k) \log (1 - {\widetilde P}_k). 
\end{align}
According to \eqref{X_tilde}, under the condition $|S_k| \geq S_{\rm th}$, the effective channel noise variance is
   \begin{align}
    {\widetilde \sigma}_k^2 & = {\mathbb E} \left[ |S_k|^{-2} \sigma^2|~ |S_k| \geq S_{\text {th}} \right],\label{eq:noise_1}
\end{align}
and the corresponding channel SNR is $ \widetilde{\rho}_k = \frac{1}{{\widetilde \sigma}_k^2}$.

{Since the CP receives the two selection sequences, it can order the channel outputs in four possible states: $(1,1)$, $(1,0)$, $(0,1)$, $(0,0)$. The achievable rates under these conditions, denoted as  $\widetilde{R}_{1, 1}, \widetilde{R}_{1, 0}, \widetilde{R}_{0, 1},$ and $\widetilde{R}_{0, 0}$, averaged over the corresponding state probabilities, yield the resulting achievable ergodic rate.
To encode an $n$-length sequence of $\widetilde{X}_k$, $n \widetilde{H}_k$ bits are allocated to transmit the binary selection sequence and the remaining $n(C_k - \widetilde{H}_k)$ bits to encode the vector quantization index of the samples $\widetilde{X}_k$ satisfying the selection condition. The average length of the sequence after selection is $n\widetilde{P}_k$, with a deviation $o(n)$ (in fact, the typical deviation is $O(\sqrt{n})$). Hence, for sufficiently large $n$, the rate allocated to the quantized selected sequence is given by 
\begin{align}
    \widetilde{c}_k = \frac{C_k - \widetilde{H}_k}{\widetilde{P}_k}.
\end{align}
Given that each relay has access only to the distribution, not the realization, of the other relay's SNR, the achievable rate for the TCI scheme, denoted as $R^{\rm tci}$, is the solution to the problem
\begin{subequations}\label{ergodic_TCI}
	\begin{align}
	\mathop {\max }\limits_{\{ \widetilde{r}_{k}:k\in \Kc\}}  \quad &  \widetilde{P}_1 \widetilde{P}_2 \widetilde{R}_{1, 1} + \widetilde{P}_1 (1 - \widehat{P}_2) \widetilde{R}_{1, 0} + (1 -\widetilde{P}_1)\widehat{P}_2 \widetilde{R}_{0, 1} + (1 - \widetilde{P}_1) ( 1-  \widehat{P}_2)\widetilde{R}_{0, 0}\\
	\text{s.t.} \;\;\; \quad&   \widetilde{c}_k = \frac{C_k - \widetilde{H}_k}{\widetilde{P}_k},  ~\forall k \in {\cal K}.  
	\end{align}
\end{subequations}
Note that the observation \( Y_k \), given that \( |S_k| \geq S_{\rm th} \), does not follow a Gaussian distribution. As established in \cite{lapidothmismatch1997}, an achievable scheme for encoding a non-Gaussian source while attaining a certain distortion level is to employ Gaussian codebooks. Therefore, the achievable rates with Gaussian codebooks for four states  are given by 
\begin{subequations}\label{eq:TCI_R}
    \begin{align}
    \widetilde{R}_{1, 1} &=  \mathop {\min }\limits_{ {\cal T} \subseteq {\cal K}} \left\{ \log \left[ 1 + \sum_{k \in {\cal T}^c} \widetilde{\rho}_{k} \left( 1 - 2^{-\widetilde{r}_{k}} \right) \right] + \sum_{k \in {\cal T}} (\widetilde{c}_{k} - \widetilde{r}_{k}) \right\}, \\
    \widetilde{R}_{1, 0} &=  \mathop {\min } \left\{ \log \left[ 1 +  \widetilde{\rho}_{1} \left( 1 - 2^{-\widetilde{r}_{1}} \right) \right], \widetilde{c}_{1} - \widetilde{r}_{1} \right\},  \\
 \widetilde{R}_{0, 1} &=  \mathop {\min } \left\{ \log \left[ 1 +  \widetilde{\rho}_{2} \left( 1 - 2^{-\widetilde{r}_{2}} \right) \right], \widetilde{c}_{2} - \widetilde{r}_{2} \right\},  \\
 \widetilde{R}_{0, 0} &=  0.
\end{align} 
\end{subequations}
According to the subscripts of the intermediate variables \( \widetilde{r}_{k} \), for each channel realization where $|S_k| \geq S_{\rm th}$, there is a corresponding intermediate variable \( \widetilde{r}_{k} \) at the relay $k$ to guide the design of codebooks.
Again, the relay processing only depends on its own channel state, while ignoring the other relay's channel state.
Combining \eqref{ergodic_TCI} and \eqref{eq:TCI_R}, by introducing the auxiliary random variables $\widetilde{\beta}_{1, 1}, \widetilde{\beta}_{1, 0}$ and $\widetilde{\beta}_{0, 1}$, the ergodic achievable rate for the TCI can be reformulated as a convex optimization problem:
\begin{subequations}\label{ergodic_TCI_2}
	\begin{align}
	\mathop {\max }\limits_{\{ \widetilde{r}_{k}:k\in \Kc\}, \widetilde{\beta}_{1, 1}, \widetilde{\beta}_{1, 0},\widetilde{\beta}_{0, 1}} \quad &  \widetilde{P}_1 \widetilde{P}_2 \widetilde{\beta}_{1, 1} + \widetilde{P}_1 (1 - \widehat{P}_2) \widetilde{\beta}_{1, 0} + (1 -\widetilde{P}_1)\widehat{P}_2 \widetilde{\beta}_{0, 1} \\
	\text{s.t.} \;\;\;\quad\quad\quad&    \widetilde{c}_k = \frac{C_k - \widetilde{H}_k}{\widetilde{P}_k},  ~\forall k \in {\cal K},\\
    & \log \left[ 1 + \sum_{k \in {\cal T}^c} \widetilde{\rho}_{k} \left( 1 - 2^{-\widetilde{r}_{k}} \right) \right] + \sum_{k \in {\cal T}} (\widetilde{c}_{k} - \widetilde{r}_{k}) \leq \widetilde{\beta}_{1, 1}, \forall \Tc \subseteq{\Kc},\\
    &\log \left[ 1 +  \widetilde{\rho}_{1} \left( 1 - 2^{-\widetilde{r}_{1}} \right) \right] \leq \widetilde{\beta}_{1, 0}, \\
    &\widetilde{c}_{1} - \widetilde{r}_{1}\leq \widetilde{\beta}_{1, 0},\\
    &\log \left[ 1 +  \widetilde{\rho}_{2} \left( 1 - 2^{-\widetilde{r}_{2}} \right) \right] \leq \widetilde{\beta}_{0, 1}, \\
    &\widetilde{c}_{2} - \widetilde{r}_{2}\leq \widetilde{\beta}_{0, 1}.
	\end{align}
\end{subequations}
}

\subsubsection{\bf MMSE-based scheme}
\label{MMSE_scheme}

Here we consider a scheme where each relay $k$ first produces the MMSE estimate of $X$ based on $(Y_k, S_k)$, and then compresses and forwards it to the CP.
In particular, given $(Y_k, S_k)$, the MMSE estimate of $X$ obtained by relay $k$ is
\begin{align}\label{x_bar_k}
{\bar X}_k &= \frac{S_k^*}{|S_k|^2 + \sigma^2} Y_k \nonumber \\
&= \frac{|S_k|^2}{|S_k|^2 + \sigma^2} X + \frac{S_k^*}{|S_k|^2 + \sigma^2} N_k.
\end{align}
Taking ${\bar X}_k$ as a new observation, we assume that relay $k$ quantizes ${\bar X}_k$ by choosing $P_{Z_k| {\bar X}_k}$ to be a conditional Gaussian distribution, i.e., 
\begin{align}\label{z_bar_k}
Z_k &= {\bar X}_k + Q_k \nonumber \\
&\triangleq U_k X + W_k,
\end{align}
where $Q_k \sim {\cal CN}(0, D_k)$ is independent of everything else, $U_k = \frac{|S_k|^2}{|S_k|^2 + \sigma^2}$, and $W_k = \frac{S_k^*}{|S_k|^2 + \sigma^2} N_k + Q_k$.
Let ${\bar X}_{k, {\rm g}}$ denote a zero-mean circularly symmetric complex Gaussian random variable with the same second moment as ${\bar X}_k$, i.e., ${\bar X}_{k, {\rm g}} \sim {\cal {CN}} \left(0, {\mathbb E} \left[ |{\bar X}_k|^2 \right] \right)$, and $Z_{k, {\rm g}} = {\bar X}_{k, {\rm g}} + Q_k$.
Using the fact that Gaussian input maximizes the mutual information of a Gaussian additive noise channel, we have the upper bound
\begin{align}\label{mutual_x_bar_z}
I({\bar X}_k; Z_k) &\leq I({\bar X}_{k, {\rm g}}; Z_{k, {\rm g}}) = \log \left( 1 + \frac{{\mathbb E} \left[ |{\bar X}_k|^2 \right]}{D_k} \right).
\end{align}
Then, the condition 
\begin{equation}\label{mutual_XkbarZk_Ck}
I({\bar X}_k; Z_k) \leq C_k,
\end{equation}
is verified by letting 
\begin{equation}\label{D}
D_k = \frac{{\mathbb E} \left[ |{\bar X}_k|^2 \right]}{2^{C_k} - 1}.
\end{equation}
Note that in the MMSE scheme, the quantization codebook generation at each relay
depends solely on its own channel statistics and each relay operation does not require the knowledge of
the other relay channel state. 

In the following, we first show that with (\ref{mutual_XkbarZk_Ck}), the bottleneck constraint of the two-relay system, i.e., 
\begin{equation}\label{bt_constr}
I({\bar X}_{\cal T}; Z_{\cal T}| Z_{{\cal T}^c}) \leq \sum_{k \in {\cal T}} C_k, ~\forall {\cal T} \subseteq {\cal K},
\end{equation}
is satisfied, and then provide a lower bound to the capacity $I(X; Z_1, Z_2)$.

Since $Z_1$ is independent of $Z_2$ given ${\bar X}_1$ and conditioning reduces differential entropy,
\begin{align}\label{mutual_X1barZ1_Z2}
I({\bar X}_1; Z_1| Z_2) & = h(Z_1| Z_2) - h(Z_1| {\bar X}_1, Z_2)\nonumber\\
& \leq h(Z_1) - h(Z_1| {\bar X}_1) \nonumber \\
&= I({\bar X}_1; Z_1) \leq C_1.
\end{align}
Analogously, we also have
\begin{align}\label{mutual_X2barZ2_Z1}
I({\bar X}_2; Z_2| Z_1) &\leq I({\bar X}_2; Z_2) \leq C_2.
\end{align}
Using the chain rule of mutual information,
\begin{align}\label{mutual_X1barX2bar_Z1Z2}
I({\bar X}_1, {\bar X}_2; Z_1, Z_2)
&= I({\bar X}_1, {\bar X}_2; Z_1) + I({\bar X}_1, {\bar X}_2; Z_2| Z_1)\nonumber\\
&= I({\bar X}_1; Z_1) \!+\! I({\bar X}_2; Z_1| {\bar X}_1) \!+\! I({\bar X}_1; Z_2| Z_1) + I({\bar X}_2; Z_2| {\bar X}_1, Z_1) \nonumber\\
&= I({\bar X}_1; Z_1) + I({\bar X}_2; Z_2| {\bar X}_1, Z_1),
\end{align}
where we used
\begin{align}\label{mutual_2}
I({\bar X}_2; Z_1| {\bar X}_1) & = I({\bar X}_2; Q_1) = 0,\\
I({\bar X}_1; Z_2| Z_1) & = I(Q_1; Z_2| Z_1) = 0.
\end{align}
since $Q_1$ is independent of everything else.
Moreover,
\begin{align}\label{mutual_X2barZ2_X1barZ1}
I({\bar X}_2; Z_2| {\bar X}_1, Z_1) & = I({\bar X}_2; Z_2| {\bar X}_1)\nonumber\\
& = h(Z_2| {\bar X}_1) - h(Z_2| {\bar X}_1, {\bar X}_2)\nonumber\\
& = h(Z_2| {\bar X}_1) - h(Z_2| {\bar X}_2)\nonumber\\
& \leq h(Z_2) - h(Z_2| {\bar X}_2)\nonumber\\
& = I({\bar X}_2; Z_2).
\end{align}
Combining (\ref{mutual_XkbarZk_Ck}), (\ref{mutual_X1barX2bar_Z1Z2}), and (\ref{mutual_X2barZ2_X1barZ1}), we have
\begin{align}
I({\bar X}_1, {\bar X}_2; Z_1, Z_2) &\leq I({\bar X}_1; Z_1) + I({\bar X}_2; Z_2) \leq C_1 + C_2.\label{mutual_X1barX2bar_Z1Z2_2}
\end{align}
From (\ref{mutual_X1barZ1_Z2}), (\ref{mutual_X2barZ2_Z1}), and (\ref{mutual_X1barX2bar_Z1Z2_2}), it follows
that the bottleneck constraint (\ref{bt_constr}) is satisfied.

The next step is to evaluate $I(X; Z_1, Z_2)$.
\begin{align}\label{mutual_XZ1Z2}
I(X; Z_1, Z_2) &= h(Z_1, Z_2) - h(Z_1, Z_2| X)\nonumber\\
&\geq  h(Z_1, Z_2| S_1, S_2) - h(Z_1, Z_2| X)\nonumber\\
&=  h(Z_1, Z_2| S_1, S_2) - h(Z_1| X) - h(Z_2| X),
\end{align}
where the last step holds because $Z_2 \mk X \mk Z_1$.
Then, we evaluate the terms in (\ref{mutual_XZ1Z2}) separately. Let
\begin{align}\label{var_W1}
V_k \triangleq {\text {Var}} (W_k| S_k) &= \frac{U_k \sigma^2}{|S_k|^2 + \sigma^2} + D_k.
\end{align}
Since $X$, $N_k$, and $Q_k$ are independent normal variables, given $(S_1, S_2)$, $Z_1$ and $Z_2$ are jointly Gaussian.
Hence,
\begin{align}\label{mutual_XZ1Z2_S1S2}
h(Z_1, Z_2| S_1, S_2)
=  {\mathbb E} \left[ \log (\pi e)^2 \det \left( \begin{bmatrix} U_1^2 & U_1 U_2\\ U_1 U_2 & U_2^2 \end{bmatrix}
+ \begin{bmatrix} V_1 & 0 \\ 0 & V_2 \end{bmatrix}
\right) \right].
\end{align}
Moreover, since the Gaussian distribution maximizes the entropy over all distributions with the same variance \cite{el2011network}, we have
\begin{equation}\label{zk_X}
h(Z_k| X) \leq {\mathbb E} \left[ \log \pi e \left( {\text {Var}} (U_k) |X|^2 + {\mathbb E} \left[ V_k \right] \right) \right].
\end{equation}
Substituting (\ref{mutual_XZ1Z2_S1S2}) and (\ref{zk_X}) into (\ref{mutual_XZ1Z2}), 
the achievable rate for this scheme is given by 
\begin{align}\label{R_lb_MMSE}
R^{\text {mmse}} & = {\mathbb E} \left[ \log \det \left( \begin{bmatrix} U_1^2 & U_1 U_2\\ U_1 U_2 & U_2^2 \end{bmatrix}
+ \begin{bmatrix} V_1 & 0 \\ 0 & V_2 \end{bmatrix}
\right) \right] - \sum_{k=1}^K {\mathbb E} \left[ \log \left( {\text {Var}} (U_k) |X|^2 + {\mathbb E} \left[ V_k \right] \right) \right].
\end{align}

\section{The Two-user Case}
\label{sec:two_user_IB}

In this section, we consider the disaggregated RAN model for the two-user case. 
To evaluate the sum capacity, we also propose an analytical upper bound. 
As we will explain later, with two users, the channel inversion for single-antenna relays is not possible (the $1 \times 2$ channel vector at each relay is a non-invertible matrix) and MMSE estimation of the transmitted signal vector from each individual observation (scalar) at relays does not provide any effective local denoising.  Hence, the achievable schemes developed for the single-user case are not useful for the two-user case. As a result, we propose an achievable scheme 
based on joint quantization of both the channel state and the received signal at each relay. 
Of course, this achievable scheme combined with the upper bound traps the optimal sum capacity, which
for this model appears to be a very challenging open problem.

\begin{figure}[ht!]
\centering
\begin{tikzpicture}[node distance = 0.03\textwidth]
\tikzstyle{neuron} = [circle, draw=black, fill=white, minimum height=0.05\textwidth, inner sep=0pt]
\tikzstyle{rect} = [rectangle, rounded corners, minimum width=0.05\textwidth, minimum height=0.05\textwidth,text centered, draw=black, fill=white]
    \node [neuron] (neuron1) {\small{User 1}};
    \node [left of=neuron1, xshift=-0.03\textwidth] (X1) {$X_1$};
    \node [below of=neuron1, yshift=-0.1\textwidth, neuron] (neuron2) {\small{User 2}};
     \node [left of=neuron2, xshift=-0.03\textwidth] (X2) {$X_2$};
     \node [right of=neuron1, xshift=0.2\textwidth, neuron] (add1) {$+$};
      \node [above of=add1, yshift=0.05\textwidth] (N1) {$N_1$};
      \node [right of=neuron2, xshift=0.2\textwidth, neuron] (add2) {$+$};
      \draw [->,line width=1pt] (neuron1) -- node[near start, above] {$h_{2, 1}$}(add2);
        \draw [->,line width=1pt] (neuron2) -- node[above] {$h_{2, 2}$}(add2);
     \draw [->,line width=1pt] (neuron1) -- node[above] {$h_{1, 1}$}(add1);
          \draw [->,line width=1pt] (neuron2) -- node[near start, above] {$h_{1, 2}$}(add1);
         \node [below of=add2, yshift=-0.05\textwidth] (N2) {$N_2$};
    \node [right of=add1, xshift=0.15\textwidth, rect] (R1) {\small{Relay 1}};
     \node [right of=add2, xshift=0.15\textwidth, rect] (R2) {\small{Relay 2}};
        \node [below right of=neuron1, xshift=0.6\textwidth, yshift=-0.05\textwidth, rect] (D) {{CP}};
     \draw [->,line width=1pt] (N1) -- (add1);
     \draw [->,line width=1pt] (add1) -- node[above] {$Y_1$}(R1);
       \draw [->,line width=1pt] (R1) -- node[above] {$W_1$}(D);
       \draw [->,line width=1pt] (N2) -- (add2);
         \draw [->,line width=1pt] (add2) -- node[above] {$Y_2$}(R2);
         \draw [->,line width=1pt] (R2) -- node[above] {$W_2$}(D);
\end{tikzpicture}
\caption{The disaggregated RAN model with two users and two relays.}  
\label{Block_diagram_2}
\end{figure}
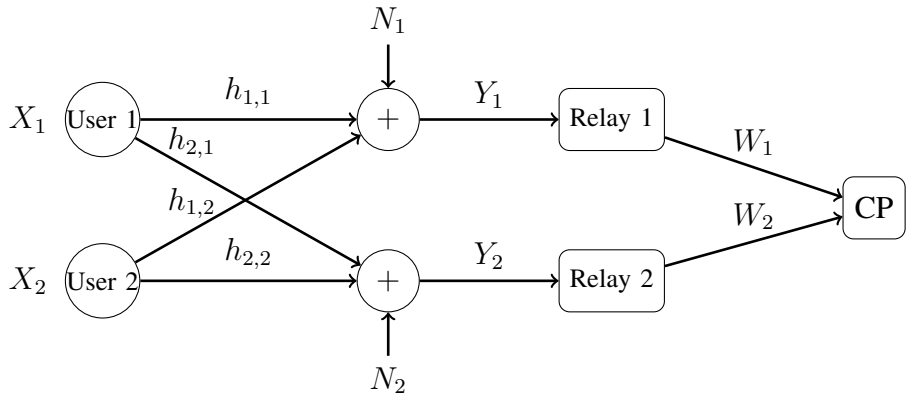

The two users model is shown in Fig. \ref{Block_diagram_2}. As for the single-user case, 
the transmitted codeword \( X_1^n \) and \( X_2^n \) are generated independently in an i.i.d. fashion 
according to a $\Cc\Nc(0,1)$ distribution. The channel transition probability is defined by the single-letter 
input-output linear Gaussian channel
\begin{align}
    Y_k = \Hm_k^\herm \Xm + N_k,
\end{align}
where $\Xm = [X_1, X_2]^{T} \sim {\cal CN}(\mathbf{0}, \Id_2)$, $\Hm_k = [h_{k, 1}, h_{k, 2}]^{\herm} \sim {\cal CN}(\mathbf{0}, \Id_2)$, and $N_k \sim {\cal CN}(0, \sigma^2)$ are respectively the channel input, channel state from the users to relay $k$, and Gaussian noise at relay $k$. 

We observe that \cite[Theorem 1]{aguerri2019TIT} derived the sum capacity for the 
general $L$-user, $K$-relay case discrete memoryless channel with oblivious relay processing
under the condition that the received signal at relay $k$ is conditionally independent of the received signal
at the other relays $\Kc \setminus \{k\}$ given the transmitted signals $X_1, \ldots, X_L$. In our case, this 
corresponds to the Markov chain condition
\[ (Y_k, \Hm_k) \mk (X_1,X_2) \mk (Y_{\bar{k}}, \Hm_{\bar{k}}), \]
that holds under the assumptions (made in this paper) that the additive noise and the channel states at the two relays are mutually independent. Hence, \cite[Theorem 1]{aguerri2019TIT} applies directly by identifying 
the ``augmented'' relay received signal as $(Y_k, \Hm_k)$, and yields
\begin{align}\label{eq:sum_rate_JD}
    R = \max_{\{P_{Z_k| Y_k, \Hm_{k}}:k\in \Kc\}} \left\{ \min_{\Tc \subseteq \Kc}~ \left\{ \sum_{k\in \Tc}~ \left[C_k - I(Y_k, \Hm_k; Z_k| \Xm)\right] + I(\Xm; Z_{\Tc^c
    } )\right\} \right\}, 
\end{align}
where $\{Z_k \sim P_{Z_k| Y_k, \Hm_{k}} : k\in \Kc\}$ are auxiliary random variables.
In addition, \cite[Theorem 4]{aguerri2019TIT} derived an \emph{equivalent} sum capacity expression in the form
\begin{subequations}\label{eq:sum_rate_SD}
	\begin{align}
	\mathop {\max }_{\{P_{Z_k| Y_k, \Hm_{k}}\}}  \quad &I(\Xm ; Z_{\Kc}) \\
	\text{s.t.}~ \quad\quad &I(Y_{\Tc}, \Hm_{\Tc}; Z_{\Tc}|Z_{\Tc^c}) \leq \sum_{k \in \Tc} C_k , ~\forall \Tc \subseteq \Kc.
	\end{align}
\end{subequations}
Unfortunately, the optimization with respect to the 
marginal conditional distributions $\{P_{Z_k| Y_k, \Hm_{k}} :k\in \Kc\}$ 
in either  \eqref{eq:sum_rate_JD} or \eqref{eq:sum_rate_SD} is far from trivial, and since the channel at hand involves continuous random variables, even a brute-force solution of these problems through numerical methods is unclear. 
Hence, in the following subsections, we propose an upper bound using \eqref{eq:sum_rate_SD} 
and an achievable scheme using \eqref{eq:sum_rate_JD}.

\subsection{Upper Bounds}
\label{informed_ub}
{
Assuming that the channel state is known to all relays and at the CP, the 
informed receiver upper bound, denoted as $R^{\rm ub}$ can be obtained by solving the following problem
\begin{subequations}\label{eq:sum_rate_informed_UB}
	\begin{align}
	\mathop {\max }_{\{P_{Z_k| Y_k, \Hm_\Kc} :k\in \Kc\}}  \quad &I(\Xm ; Z_{\Kc}|\Hm_{\cal K}) \\
	\text{s.t.} ~\quad\quad &I(Y_{\Tc}; Z_{\Tc}| \Hm_{\Tc}, Z_{\Tc^c}) \leq \sum_{k \in \Tc} C_k , ~\forall \Tc \subseteq \Kc.
	\end{align}
\end{subequations}
Notice that the optimization here is w.r.t. the conditional distributions $\{P_{Z_k| Y_k, \Hm_\Kc} \}$ where the conditioning is 
over the whole channel state $\Hm_\Kc$.

A special case of \eqref{eq:sum_rate_informed_UB}, where the channel states $\Hm_k = \hv_k, ~\forall k \in {\cal K}$ 
are fixed constants under the assumption of Gaussian inputs and Gaussian channels, has been studied in 
\cite[Theorem 5]{aguerri2019TIT}. In this case, the sum rate in 
\eqref{eq:sum_rate_informed_UB} for $\Hm_k = \hv_k, ~\forall k \in {\cal K}$ deterministic and known takes on the form
\begin{align}\label{R_fixed_rho_two_user}
& R (\hv_{\cal K}, C_{\cal K}) = \mathop {\max }\limits_{r_{\cal K} \geq 0} \left\{ \mathop {\min }\limits_{ {\cal T} \subseteq {\cal K}} \left\{\sum_{k \in {\cal T}} \left[C_k \!-\! r_k\right]+ \log \det \left[ \Id_2\! +\!\sum_{k \in {\cal{T}}^c } \frac{\left(1 \!-\! 2^{-r_k}\right)}{\sigma^2} \hv_k \hv_k^{\herm}  \right]  \right\}\right\},
\end{align}
where $r_{\cal K}$ are non-negative optimization random variables representing $\{I(Y_k; \Zm_k|\Xm)\}$
\cite{dandervoich2008communication}. 
Introducing an auxiliary optimization variable $\beta$, $R (\hv_{\cal K}, C_{\cal K})$ can be computed 
as the optimal value of the convex problem:
\begin{subequations}\label{eq_problem_two_user}
	\begin{align}
	\mathop {\max }\limits_{r_{\cal K}, \beta}\;\;\; & \beta \label{eq_problem_a_two_user}\\
	 \text{s.t.} \quad&  \sum_{k \in {\cal T}} (C_k \!-\! r_k)  \!+\! \log \det \left[ \Id_2 + \sum_{k \in {\cal{T}}^c } \frac{\left(1 - 2^{-r_k}\right)}{\sigma^2} \hv_k \hv_k^{\herm}  \right] \!\geq\! \beta, \forall~ {\cal T} \!\subseteq\! {\cal K}, \label{eq_problem_b_two_user}\\
	& r_k \geq 0, ~\forall k \in {\cal K}. \label{eq_problem_c_two_user}
	\end{align}
\end{subequations}

\begin{remark} \label{re: operational_meaning_two_user}
Similar to Remark \ref{re: operational_meaning} for the single-user case, the achievability of \eqref{R_fixed_rho_two_user} in the two-user scenario can be established by a random quantization coding ensemble at relay $k$, generated in an i.i.d. manner according to the auxiliary random variable \( Z_k \sim Y_k + Q_k = \hv_k^\herm \Xm + N_k + Q_k \), where \( Q_k \sim \Cc\Nc\left(0, \frac{2^{-r_k} \sigma^2}{\hv_k^\herm \hv_k(1 - 2^{-r_k})}\right) \) for some \( r_k \geq 0 \). Relay $k$ then creates its quantization index using Wyner-Ziv binning, based only on its observation \( Y_k^n \) and its own codebook.
The optimization problem in \eqref{eq_problem_two_user} thus implies that the optimal codebook design at each relay depends on the channel state of both relays through intermediate variables $r_k$. 
\end{remark}

The fixed channel state scenario in \eqref{R_fixed_rho_two_user} can be generalized to the random ergodic channel state scenario through parallel channel decomposition, where each realization of the channel state  $\Hm_k$ can be interpreted as a distinct ``parallel channel" occurring with probability determined by the underlying channel state distribution. 
This yields that the informed receiver upper bound \( R^{\rm ub}_0 \) in \eqref{eq:sum_rate_informed_UB} can be obtained by solving the problem
\begin{subequations}\label{ergodic_problem_two_users}
	\begin{align}
	\mathop {\max }\limits_{\{c_{k}(\Hm_{\Kc}):k\in \Kc\}} \quad & {\mathbb E} \left[ R (\Hm_{\cal K}, \{c_{k}(\Hm_{\Kc})\} ) \right] \label{ergodic_problem_two_user_a}\\
	\text{s.t.} \quad\;\quad &  {\mathbb E} \left[ c_k (\Hm_{\Kc}) \right] \leq C_k, \label{ergodic_problem_two_user_b}\\
	& c_k (\Hm_{\Kc}) \geq 0, ~\forall k \in {\cal K}, \label{ergodic_problem_two_user_c}
	\end{align}
\end{subequations}
where $R (\Hm_{\cal K}, c_{\cal K} ) $ is defined in \eqref{R_fixed_rho_two_user} and $c_k(\Hm_{\Kc})$, a function of all channel states, is the partial capacity allocated to the $k$-th relay. 
Similar to the challenges discussed in \eqref{ergodic_problem}, solving \eqref{ergodic_problem_two_users} remains an open question, although an asymptotically tight approximation can be obtained using the DPP approach as done for the single-user 
case (details are omitted for brevity). 

To derive an upper bound in analytical form, we further relax the problem \eqref{eq:sum_rate_informed_UB} by assuming that the relays can cooperate. This leads to the looser ``cooperative informed receiver'' upper bound, denoted as \( R^{\rm ub} \), which can be obtained as the solution of
\begin{subequations}\label{DIB_problem_ub2}
	\begin{align}
	\mathop {\max }\limits_{\{P_{\Zm_k| Y_k, \Hm_{\Kc}}:k\in \Kc\}} & I(\Xm; Z_{\cal K}| \Hm_{\cal K}) \label{DIB_problem_ub2_a}\\
	\text{s.t.} ~\quad\;\;\; &  I(Y_{\cal K}; Z_{\cal K}|  \Hm_{\cal K}) \leq \sum_{k \in {\cal K}} C_k. \label{DIB_problem_ub2_b}
	\end{align}
\end{subequations}

Let $\Hm = [\Hm_1, \Hm_2] \in \mathbb{C}^{2 \times 2}$.{ Since $\Hm_k \sim \Nc(\mathbf{0}, \Id_2)$, $\Hm \Hm^\herm$ is a Wishart matrix and has $T = 2$ positive eigenvalues according to the Marchenko-Pastur law in \cite{tulino2004random}.}
It is known from \cite[(A17)]{IBxu} that the probability density function (pdf) of the unordered eigenvalue $\lambda$ 
 of $\Hm \Hm^\herm$ is
\begin{equation}\label{pdf_lambda_2}
f_{\lambda} (\lambda) = \frac{1}{T} \sum_{i=0}^{T-1} \frac{i!}{(i + n)!} [L_{i}^{n} (\lambda)]^2 \lambda^{n} e^{(-\lambda)},
\end{equation}
where $n$ denotes the difference between maximum dimension and minimum dimension of $\Hm$, i.e., $n=0$ in this case, and $L_{i}^{n}$ defined below is the associated Laguerre polynomial of order $i$. 
\begin{align}
L_{i}^{n} (\lambda) = \frac{e^{\lambda}}{i! \lambda^{n}}  \frac{d^i}{d \lambda^i} \left(e^{-\lambda} \lambda^{n + i} \right).
\end{align}
{Then, following \cite[Theorem 1]{IBxu},  a closed-form solution to problem \eqref{DIB_problem_ub2} is given by}
\begin{equation}\label{R_up_KM_2}
R^{\text {ub}} = T \int_{\nu \sigma^2}^{\infty} \left[ \log \left(1 + \frac{\lambda}{\sigma^2} \right) - \log (1 + \nu)\right] f_\lambda (\lambda) d \lambda,
\end{equation}
where $\nu$ is chosen such that the following bottleneck constraint is met
\begin{equation}\label{bottle_constr_KM_2}
\int_{\nu \sigma^2}^{\infty} \left( \log \frac{\lambda}{\nu \sigma^2} \right) f_\lambda (\lambda) d \lambda = \frac{C_1 + C_2}{T}.
\end{equation}
In the following lemma, we consider the limit case of $R^{\rm ub}$.
\begin{lemma}[Limiting Behavior of \( R^{\rm ub}\)]\label{R_ub_two_user}
   Consider the cooperative informed receiver upper bound $R^{\rm ub}$ in \eqref{R_up_KM_2}.
   \begin{itemize}
       \item Given the relay-CP link capacity constraints $C_1$ and $C_2$, when SNR goes to infinity, i.e., $\sigma \rightarrow 0$, the upper bound $R^{\rm ub}$ tends asymptotically to $\sum_{k \in {\cal K}} C_k$.
       \item For given SNR, when $\sum_{k \in {\cal K}} C_k$ approaches infinity, we have 
   \begin{align}\label{eq:limit_case_C_inf}
       R^{\rm ub} &\rightarrow I(\Xm; \Ym_{\Kc}, \Hm_{\Kc}) \nonumber \\
       &= T \int_{0}^{\infty} \left[ \log \left(1 + \frac{\lambda}{\sigma^2} \right)\right] f_\lambda (\lambda) d \lambda.
   \end{align}
   \end{itemize} 
\end{lemma}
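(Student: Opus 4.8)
The plan is to mirror the proof of the single-user analogue, Lemma~\ref{re:R_ub}, the only differences being that the effective number of degrees of freedom here is $T = 2$ and that the eigenvalue density is \eqref{pdf_lambda_2} rather than \eqref{pdf_lambda}. Throughout I would write $C \eqdef C_1 + C_2$ and $t \eqdef \nu \sigma^2$ for the lower limit of integration.

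First I would recast the bottleneck constraint \eqref{bottle_constr_KM_2} entirely in terms of $t$. Setting $g(t) \eqdef \int_t^\infty \log(\lambda/t)\, f_\lambda(\lambda)\, d\lambda$, the constraint reads $g(t) = C/T$. Differentiating under the integral (the boundary term at $\lambda = t$ vanishes) gives $g'(t) = -\tfrac{1}{t}\,\Pr\{\lambda > t\} < 0$, so $g$ is strictly decreasing on $(0,\infty)$; moreover $g(t) \to 0$ as $t \to \infty$, and writing $g(t) = \int_t^\infty \log\lambda\, f_\lambda(\lambda)\, d\lambda - (\log t)\,\Pr\{\lambda > t\}$ with the first term finite (since $f_\lambda$ is bounded near the origin for the density \eqref{pdf_lambda_2}), we get $g(t) \to +\infty$ as $t \to 0^+$. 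Hence for every fixed $C > 0$ there is a unique root $t = t^\star(C)$, and crucially $t^\star$ depends only on $C$ (and $T$) but not on $\sigma$. Substituting $\nu = t^\star/\sigma^2$ into \eqref{R_up_KM_2} then yields the clean expression
\[
R^{\rm ub} = T\int_{t^\star}^{\infty}\log\frac{\sigma^2 + \lambda}{\sigma^2 + t^\star}\, f_\lambda(\lambda)\, d\lambda .
\]

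For the first bullet ($\sigma \to 0$ with $C$ fixed), $t^\star$ is held fixed and, for each $\lambda \ge t^\star$, the integrand increases monotonically to $\log(\lambda/t^\star)$ as $\sigma \downarrow 0$; monotone convergence gives $R^{\rm ub} \to T\int_{t^\star}^\infty \log(\lambda/t^\star)\, f_\lambda(\lambda)\, d\lambda = T\,g(t^\star) = C$, which is the claim. For the second bullet ($C \to \infty$ with $\sigma$ fixed), $g(t^\star) = C/T \to \infty$ together with the monotonicity of $g$ and $g(0^+) = +\infty$ forces $t^\star(C) \downarrow 0$; hence $\nu = t^\star/\sigma^2 \to 0$, so $\log(1+\nu) \to 0$ and the lower limit $\nu\sigma^2 = t^\star \to 0$ in \eqref{R_up_KM_2}, and once more the integrand (extended by zero below $t^\star$) increases monotonically, so monotone convergence gives $R^{\rm ub} \to T\int_0^\infty \log(1+\lambda/\sigma^2)\, f_\lambda(\lambda)\, d\lambda$. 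It then remains to identify this limit with $I(\Xm;\Ym_\Kc,\Hm_\Kc)$: conditioned on $\Hm_\Kc$ the model is a Gaussian MIMO channel with input covariance $\Id_2$ and noise covariance $\sigma^2\Id_2$, so $I(\Xm;\Ym_\Kc\mid \Hm_\Kc) = \log\det(\Id_2 + \sigma^{-2}\Hm^\herm\Hm) = \sum_{i=1}^T \log(1+\lambda_i/\sigma^2)$ with $\{\lambda_i\}$ the eigenvalues of $\Hm\Hm^\herm$; averaging over $\Hm$ and invoking the unordered-eigenvalue density \eqref{pdf_lambda_2} gives exactly $T\int_0^\infty \log(1+\lambda/\sigma^2)\, f_\lambda(\lambda)\, d\lambda$.

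The only points that require genuine (if routine) care are the two passages to the limit inside the integral and the claim that $t^\star$ is independent of $\sigma$. I expect the main obstacle to be bookkeeping the monotonicity of the integrands uniformly in the parameter being sent to its limit; since in both bullets that monotonicity holds, the monotone convergence theorem applies directly and no dominating function needs to be exhibited.
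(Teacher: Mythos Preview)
Your proposal is correct and follows exactly the approach the paper indicates (namely, mirroring the proof of Lemma~\ref{re:R_ub} in Appendix~\ref{R_ub}). Your reparametrization $t=\nu\sigma^2$, together with the observation that the constraint \eqref{bottle_constr_KM_2} pins down $t^\star$ independently of $\sigma$, makes the limit $\sigma\to 0$ cleaner and more rigorous than the paper's own argument for the single-user analogue, which relies on the heuristic approximation $\log(1+x)\approx\log x$ without an explicit convergence theorem.
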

\begin{proof}
       Lemma \ref{R_ub_two_user} can be proven similarly as Lemma \ref{re:R_ub}. The proof is thus omitted for brevity.
\end{proof}



{As done in the single-user case of the previous section, we wish to trap $R_{0}^{\rm ub}$ in \eqref{ergodic_problem_two_users} between the easily computable looser upper bound $R^{\rm ub}$ in \eqref{R_up_KM_2} and a lower bound $\check{R}^{\rm ub}$. Notice again that the latter is a lower bound to an upper bound, and therefore it is still generally not an achievable rate.}

A lower bound to $\check{R}^{\rm ub} \leq R^{\rm ub}_0$ can be obtained by setting the allocated capacity for user \( k \) to a fixed value \( C_k \), regardless of the channel state. This yields
\begin{equation}
    \check{R}^{\rm ub} = \mathbb{E} \left[ R (\Hm_{\cal K}, C_{\cal K} ) \right], \label{ergodic_problem_aa_two_user}
\end{equation}
where \( R (\Hm_{\cal K}, C_{\cal K} ) \) is defined in \eqref{R_fixed_rho_two_user}.
For each channel realization \( \Hm_{\cal K} \), the rate \( R(\Hm_{\cal K}, C_{\cal K}) \) can be reformulated as a convex optimization problem \eqref{eq_problem_two_user} and computed by standard convex optimization tools. 
The expectation is computed by MC  simulation. 

\subsection{Achievable Scheme under Random i.i.d. Channel State}
\label{achive_shmes_2}

Compared to \cite{aguerri2019TIT}, a key challenge in our case is that the CP does not have direct access to the channel state, while the relays do.
To address this, we propose a method where the relays not only forward the received signals, but also transmit the channel state to the CP. We call this approach {\em fronthaul compression} (FC). In this scheme, a portion of the relay-CP link capacity is dedicated to the lossy compression of the channel state $\Hm_k$, while the remaining capacity is allocated to the transmission of the received signal $Y_k$. This approach strikes a balance between transmitting the necessary channel information and the received signal data.

In this scheme, the random quantization codebook at relay \( k \) is generated i.i.d. according to two random variables: \( \Zm_k^{(1)} \), which is used to generate the quantization codebook for the channel information \( \Hm_k \), and \( Z_k^{(2)} \), which is used to generate the quantization codebook for the received signal \( Y_k \), {given $\Zm_k^{(1)}$}. 
{The chosen underlying joint distribution of ${\Xm}, Y_k, \Hm_k, \Zm_k^{(1)}$, and $Z_k^{(2)}$ with respect to which
all the following mutual information expressions are defined is given by 
\begin{align}\label{eq:joint_pdf}
    P_{{\Xm}, Y_k, \Hm_k, \Zm_k^{(1)}, Z_k^{(2)}} = 
    P_{\Xm} P_{\Hm_k} P_{Y_k|{\Xm}, \Hm_k} P_{\Zm_k^{(1)}|\Hm_k} {P_{Z_k^{(2)}| Y_k, \Zm_k^{(1)}}}, 
\end{align}}
{where $P_{\Xm}, P_{\Hm_k}, P_{Y_k|{\Xm}, \Hm_k}$ are fixed by the input, state, and channel models, 
while $P_{\Zm_k^{(1)}|\Hm_k}$ and $P_{Z_k^{(2)}| Y_k, \Zm_k^{(1)}}$ define the auxiliary variables.} 

First, we consider encoding the channel states. The minimum number of bits per symbol (i.e., per $2 \times 1$ vector)
required to encode a sequence of channel states 
$\Hm_k^n$ with the quadratic distortion metric and given the average distortion per component $D$ 
is given by \cite{cover2012elements}
\begin{align}\label{eq:Zm_k1}
    R_k(D) &= \min_{P_{\Zm_k^{(1)}|\Hm_k}: \mathbb{E}\left[d\left(\Hm_k,\Zm_k^{(1)}\right)\right] \leq 2D} I(\Hm_k; \Zm_k^{(1)}).
\end{align}
According to \cite[Theorem 10.3.3]{cover2012elements}, \eqref{eq:Zm_k1} under Gaussian inputs can be derived as 
\begin{align}\label{eq:R_D}
    R_k(D) = 2 \log \frac{1}{D}, ~\text{where} ~ 0 \leq D \leq 1.
\end{align}
Next, we consider encoding the received signal $Y_k$. 
Let $\Em_k$ denote the error vector resulting from the quantization of $\Hm_k$, i.e., $\Em_k = \Hm_k- \Zm_k^{(1)}$. Based on \cite[Theorem 10.3.3]{cover2012elements} for Gaussian inputs, the error $\Em_k$ and the representation ${\Zm_k^{(1)}}$ both follow complex Gaussian distributions, specifically $\Em_k \sim \Cc\Nc (\mathbf{0}, D \Id)$,  and ${\Zm_k^{(1)}} \sim \Cc\Nc (\mathbf{0}, (1-D) \Id)$. Therefore, the received signal at relay $k$ can be rewritten as
\begin{align}\label{eq:y_k}
    Y_k = {\Zm_k^{(1)}}^\herm \Xm + \Em_k^\herm \Xm + N_k,
\end{align}
where the second moment of $Y_k$ given $ \Zm_k^{(1)}$ is $\mathbb{E}\left[|Y_k|^2 | \Zm_k^{(1)}\right] =  
{ \| \Zm_k^{(1)} \|^2} + 2D + \sigma^2 $. 

Replacing the general $\Zm_k$ with $(\Zm_k^{(1)},Z_2)$ in 
\eqref{eq:sum_rate_JD} the mutual information $I(Y_k, \Hm_k; \Zm_k| \Xm)$ can be written as
\begin{subequations}
    \begin{align}
    &I(Y_k, \Hm_k; \Zm_k^{(1)}, Z_k^{(2)}| \Xm) \nonumber \\
    = &I(Y_k, \Hm_k; \Zm_k^{(1)}| \Xm) + I(Y_k, \Hm_k; Z_k^{(2)}|\Zm_k^{(1)}, \Xm) \\
    = &I(\Hm_k; \Zm_k^{(1)}| \Xm) \!+\! I(Y_k;\Zm_k^{(1)}| \Hm_k, \Xm) \!+\! I(Y_k; Z_k^{(2)}|\Zm_k^{(1)}, \Xm) \! +\! I(\Hm_k; Z_k^{(2)}|Y_k, \Zm_k^{(1)}, \Xm) \label{eq:chain_1}\\
    = &I(\Hm_k; \Zm_k^{(1)}) +  I(Y_k; Z_k^{(2)}|\Zm_k^{(1)}, \Xm) \label{eq:chain_2} \\
    = &R_k(D) + I(Y_k; Z_k^{(2)}|\Zm_k^{(1)}, \Xm),\label{eq:chain_3}
\end{align}
\end{subequations}
\eqref{eq:chain_1} follows by using the chain rule of mutual information, 
\eqref{eq:chain_2} follows from the fact that, 
due to the joint distribution in~\eqref{eq:joint_pdf}, 
$I(Y_k;\Zm_k^{(1)}| \Hm_k, \Xm)$ and $ I(\Hm_k; Z_k^{(2)}|Y_k, \Zm_k^{(1)}, \Xm)$ are both 
equal to zero, and  \eqref{eq:chain_3}  follows from  \eqref{eq:Zm_k1}.
On the other hand, $I(\Xm; \Zm_{\Tc^c})$ in \eqref{eq:sum_rate_JD} can be expressed as
\begin{subequations}
    \begin{align}
    I(\Xm; Z_{\Tc^c}^{(2)}, \Zm_{\Tc^c}^{(1)}) &= I(\Xm; Z_{\Tc^c}^{(2)}| \Zm_{\Tc^c}^{(1)}) + I(\Xm;\Zm_{\Tc^c}^{(1)} ) \\
    &= I(\Xm; Z_{\Tc^c}^{(2)}| \Zm_{\Tc^c}^{(1)}), \label{eq: 2_term}
\end{align}
\end{subequations}
{where \eqref{eq: 2_term} holds, since $\Xm$ and $\Zm_{\Tc^c}^{(1)}$ are independent according to the joint distribution in \eqref{eq:joint_pdf}.} Then, by substituting \eqref{eq:chain_3} and \eqref{eq: 2_term} into \eqref{eq:sum_rate_JD}, the achievable rate obtained by the FC scheme takes on the form
\begin{align}\label{eq:sum_rate_JD_2}
    R = \max_{\left\{P_{Z_k^{(2)}| Y_k, \Hm_k}:k\in \Kc\right\}} \left\{ \min_{\Tc \subseteq \Kc} \left\{ \sum_{k \in \Tc} \left[C_k \!-\! R_k(D) \!-\! I(Y_k; Z_k^{(2)}|\Zm_k^{(1)}, \Xm)\right] \!+\! I(\Xm; Z_{\Tc^c}^{(2)}| \Zm_{\Tc^c}^{(1)}) \!\!\right\}\!\!\right\}.
\end{align}
Note that although the quantized channel coefficient $\Zm_k^{(1)}$ is known, the received signal $Y_k$ is not conditionally Gaussian given $X$ and $\Zm_k^{(1)}$ due to the term $\Em_k$ in \eqref{eq:y_k}.
Therefore,  \eqref{eq:sum_rate_JD_2} it is still analytically intractable.
 To overcome this, according to \cite{lapidothmismatch1997}, an achievable scheme for encoding a non-Gaussian source while attaining a certain distortion level is to employ Gaussian codebooks.
Therefore, after receiving non-Gaussian observation symbols $Y_k^n$, relay $k$ makes use of a quantization codebook generated in an i.i.d. fashion according to an auxiliary conditionally Gaussian random variable 
\begin{align}\label{eq:Z_k_g}
    {Z_{k, {\rm g}}^{(2)}} &= {\Zm_k^{(1)}}^\herm \Xm +  \widehat{N}_{k, {\rm g}} + \widehat{Q}_k,
\end{align}
given $\Zm_k^{(1)}$, where $\widehat{N}_{k, {\rm g}}$ is an independent 
Gaussian random variable with zero mean and the same second moment as  
$\Em_k^\herm \Xm + N_k$ in (\ref{eq:y_k}), i.e., $ \widehat{N}_{k, {\rm g}} \sim {\cal CN}(0, 2D + \sigma^2)$. 
The corresponding SNR in this user-relay channel is denoted as 
$\hat{\rho}_{k} = \frac{{\Zm_k^{(1)}}^\herm {\Zm_k^{(1)}}}{2D + \sigma^2}$.
The additive variable $\widehat{Q}_k$ represents a Gaussian random variable with variance $\frac{2^{-\hat{r}_k}}{\hat{\rho}_k(1 - 2^{-\hat{r}_k})}$, and the design of $\hat{r}_k$ will be explained later. Therefore, the achievable rate of the FC scheme is given by
\begin{align}\label{eq:sum_rate_JD_2_LB}
    {R}^{\rm fc} \!&=\! \min_{\Tc \subseteq \Kc} \!\left \{\sum_{k \in \Tc} \left[C_k - R_k(D) \!-\! I(Y_{k, {\rm g}}; Z_{k, {\rm g}}^{(2)}|\Zm_k^{(1)}, \Xm)\right] +  I(\Xm; Z_{\Tc^c, {\rm g}}^{(2)}| \Zm_{\Tc^c}^{(1)}) \right\}.
\end{align}
which can be further maximized over the variables $\{\hat{r}_k\}$. 

Suppose that the quantized channel states $\Zm_{\Kc}^{(1)}$ are constant and known at \emph{both} the relays and the CP, which has also been considered in \eqref{R_fixed_rho_two_user}. Then, the optimal value of \eqref{eq:sum_rate_JD_2_LB} can be obtained by solving 
\begin{align}\label{eq:R_ZK}
& R^{\rm fc} (\Zm_{\cal K}^{(1)}, C_{\cal K}) \!=\! \mathop {\max }\limits_{r_{\cal K} \geq 0} \left\{ \mathop {\min }\limits_{ {\cal T} \subseteq {\cal K}} \left\{\sum_{k \in {\cal T}} \left[C_k - R_k(D)- r_k\right]\!+\! \log \det\left[ \Id_2 \!+\! \sum_{k \in {\cal{T}}^c } \frac{1 \!-\! 2^{-r_k}}{2D \!+\! \sigma^2}\Zm_k^{(1)} {\Zm_k^{(1)}}^{\herm}  \right]  \right\}\right\},
\end{align}
which is analogous to  \eqref{R_fixed_rho_two_user} using the induced user-relay channel in \eqref{eq:Z_k_g}.

However, for random ergodic channel state, the parallel channel decomposition and the solution of 
\eqref{eq:R_ZK} requires that both relays know the whole quantized state $\Zm^{(1)}_\Kc$. 
This condition is not satisfied in our setting where each relay only has access to its own channel state. 
To overcome this problem, here we use a strategy referred to as the ``single-relay'' assumption, such that 
each relay determines its own variable $\hat{r}_k$ assuming that it is the only relay connected to the CP (admittedly, a very conservative assumption). It follows that the achievable rate under the single-relay assumption for the FC scheme, denoted as $R^{\rm fc}_s$, is given by
\begin{align}\label{eq: R_Lb_1_single}
   R^{\rm fc}_s
   &= \mathbb{E}_{\Zm_{\Kc}^{(1)}} \left[{R}_s^{\rm fc}\left(\Zm_{\Kc}^{(1)}, C_{\cal K}\right)\right].
\end{align}
where  $R_s^{\rm fc} (\Zm_{\cal K}^{(1)}, C_{\cal K})$, is given by the term inside the minimization in 
\eqref{eq:R_ZK} where the variables $r_\Kc$ are set to the values $\{r_k^s\}$ given by 
    \begin{align}\label{eq:r_single_two_user}
        r_k^s = \arg \mathop {\max }\limits_{r_{k} \geq 0} \left\{ \mathop {\min }
        \left\{C_k - R_k(D)- r_k, ~ \log \left[ 1 + \frac{1 - 2^{-r_k}}{2D + \sigma^2}\|\Zm_k^{(1)}\|^2 \right]  \right\}\right\}.
    \end{align}
The expectation in \eqref{eq: R_Lb_1_single} is obtained by MC simulation.

\section{Numerical Results}
\label{simulation}

In this section we present some illustrative results for the single-user and the two-user case. 
 For simplicity, we consider equal capacity constraint, i.e., $C_1 = C_2 = C$.

\subsection{The Single-user Case}

\subsubsection{Evaluation of the Cooperative Informed Receiver Upper Bound}
\begin{figure*}[ht!]
\centering
        \begin{subfigure}[b]{0.45\textwidth}
        \includegraphics[width=1.03\columnwidth]{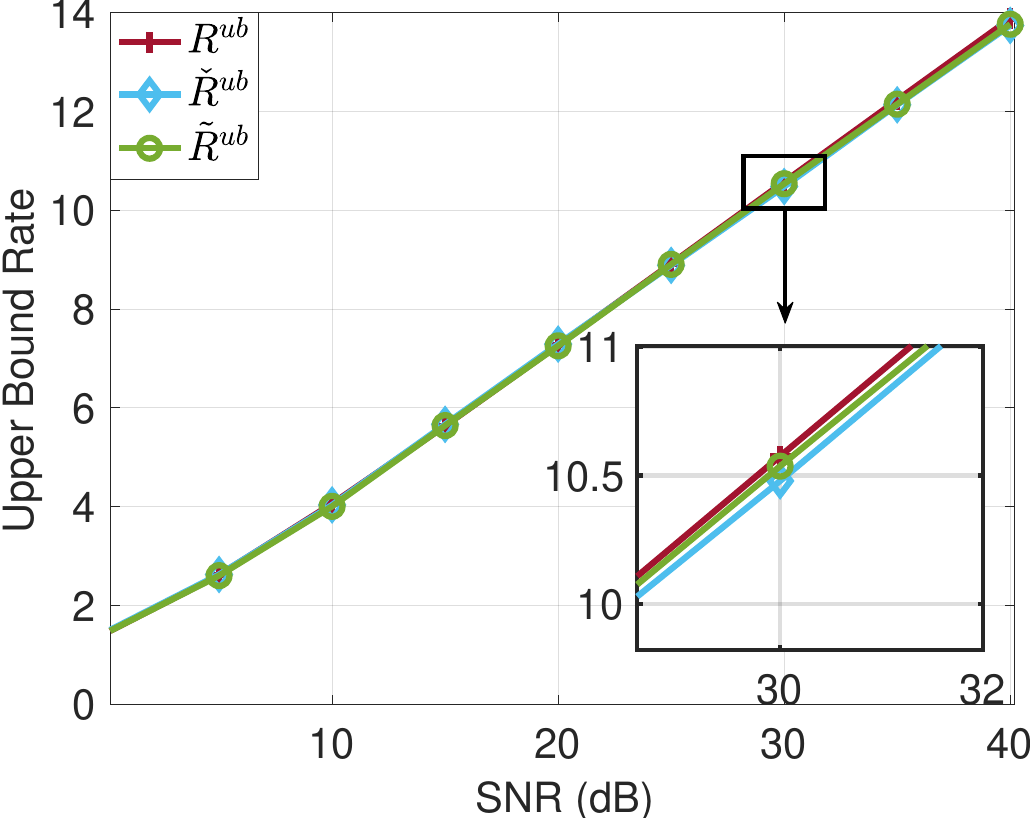}
        \caption{$C = 10$ bits/complex dimension}
        \label{R_SNR}
        \end{subfigure}
        ~~
        \begin{subfigure}[b]{0.45\textwidth}
        \includegraphics[width=\columnwidth]{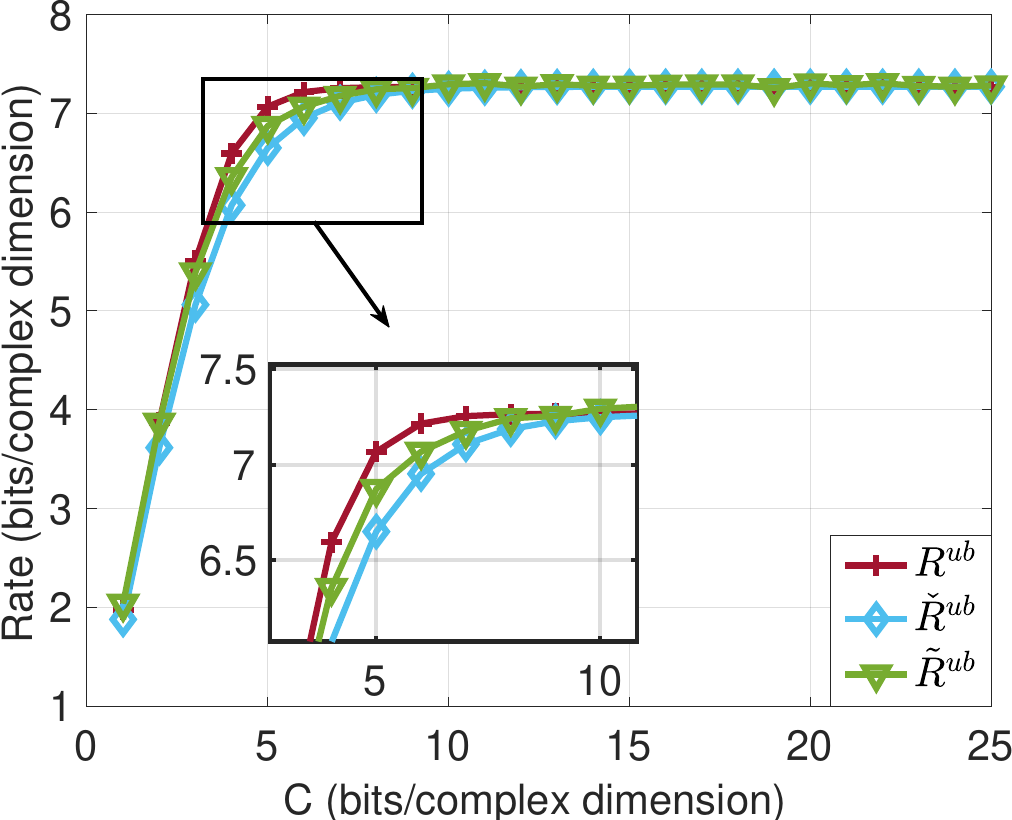}
        \caption{${\rm SNR}=20~$dB}
        \label{R_C}
        \end{subfigure}
 \caption{{Cooperative informed receiver upper bound, lower bound, and approximation, of the informed receiver upper bound $R_0^{\rm ub}$.}}
     \label{fig:baselines}
     \vspace{-6mm}
\end{figure*}

{We consider }
\begin{itemize}
    \item the cooperative informed receiver upper bound $R^{\text{ub}}$ defined in \eqref{R_up_KM};
\end{itemize}
and the following two methods to explicitly evaluate the informed receiver upper bound $R^{\rm ub}_0$:
\begin{itemize}
    \item the lower bound to $R^{\rm ub}_0$, denoted as  $\check{R}^{\text{ub}}$, defined in \eqref{ergodic_problem_aa};
    \item the approximation to $R^{\rm ub}_0$, denoted as  $\widetilde{R}^{\rm ub}$, defined in Algorithm \ref{alg:loop}.
\end{itemize}
 In evaluation of $\widetilde{R}^{\rm ub}$, we choose the parameter $V$ as $100$ and change the value of $C_{\rm max}$ from $C$ to $C + 20$ in step 4 and select the one that yields the highest rate value.
 
In Fig. \ref{R_SNR}, the effect of SNR is investigated with $C = 10$ bits/complex dimension.
Notice that $R^{\text{ub}}$ is an upper bound to $\check{R}^{\text{ub}}$.
Hence, we observe that $R^{\text{ub}}$ is always larger than $\check{R}^{\text{ub}}$ over different SNR values, but the gap is quite small. 
Furthermore, $\widetilde{R}^{\rm ub}$ is also close to $R^{\text{ub}}$ and $\check{R}^{\text{ub}}$. 
In Fig. \ref{R_C}, the effect of the relay-CP link capacity $C$ for fixed SNR is illustrated. Also in this case, 
the values of  $R^{\rm ub}$, $\check{R}^{\text{ub}}$ and $\widetilde{R}^{\rm ub}$ are very close and in the correct order, 
hinting to the fact that the DPP method with the chosen parameters is effectively able to very closely approximate
the informed receiver upper bound $R_0^{\rm ub}$. 

\subsubsection{Achievable Schemes}
\begin{figure*}[t]
\centering
        \begin{subfigure}[b]{0.45\textwidth}
        \includegraphics[width=\columnwidth]{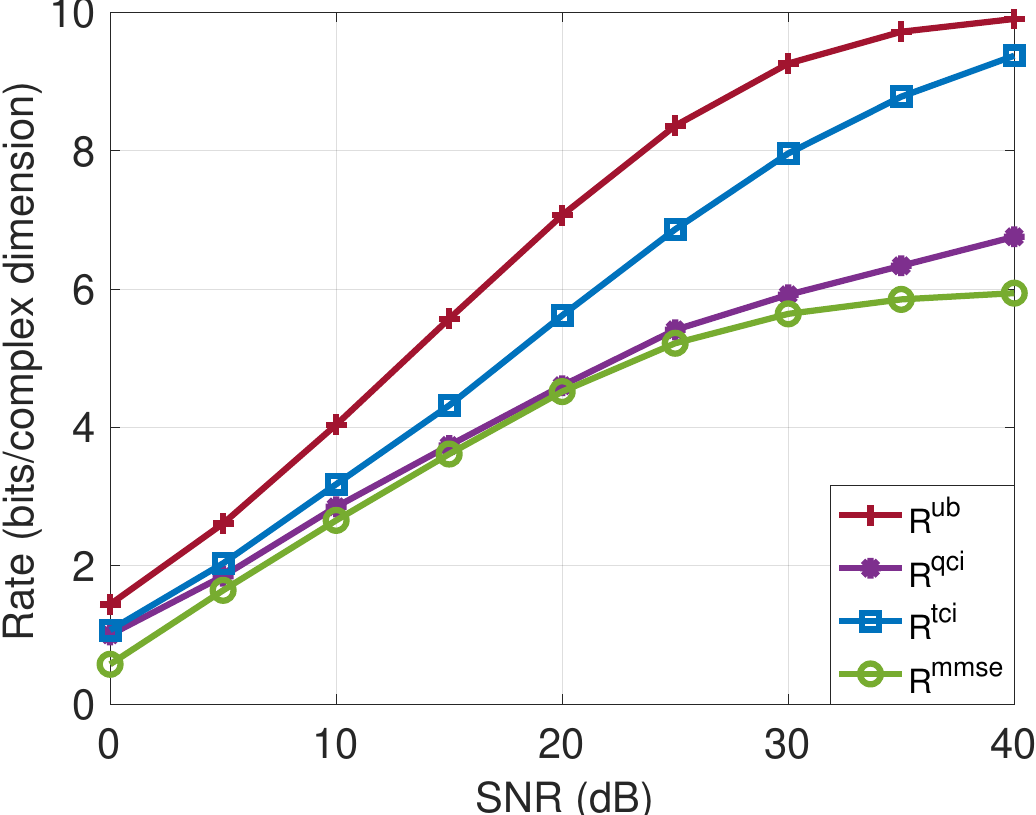}
        \caption{$C = 5$ bits/complex dimension}
        \label{R_VS_rho_1}
        \end{subfigure}
        ~~
        \begin{subfigure}[b]{0.45\textwidth}
        \includegraphics[width=\columnwidth]{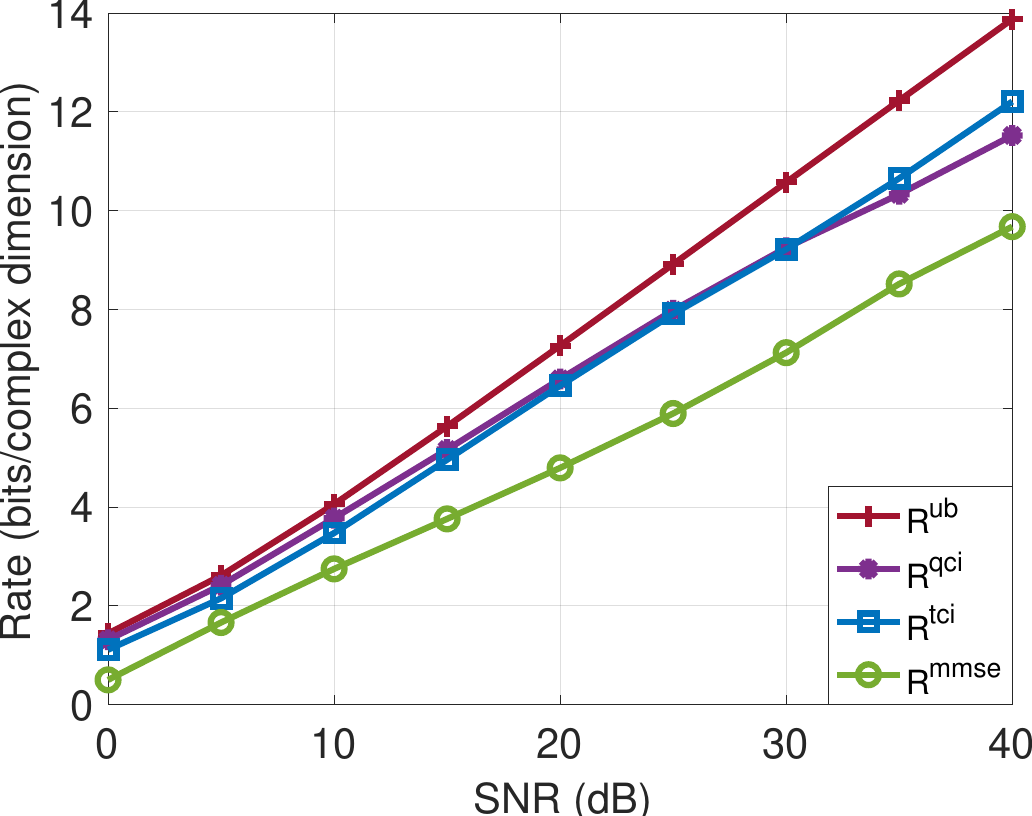}
        \caption{$C = 10$ bits/complex dimension}
        \label{R_VS_rho_2}
        \end{subfigure}
 \caption{The three proposed achievable schemes compared with the
 cooperative informed receiver upper bound for fixed $C$, as a function of the SNR.}
     \label{R_VS_rho}
\end{figure*}

We have evaluated the achievable rates obtained by three proposed different achievable schemes:
\begin{itemize}
    \item quantized channel inversion (QCI) scheme: achievable rate,  $R^{\rm qci}$, defined in \eqref{ergodic_QCI_2};
    \item truncated channel inversion (TCI) scheme: achievable rate, $R^{\rm tci}$, defined in \eqref{ergodic_TCI_2};
    \item MMSE-based scheme: achievable rate, $R^{\rm mmse}$, as defined in \eqref{R_lb_MMSE}.
\end{itemize}
and compared them with the cooperative informed receiver upper bound $R^{\rm ub}$. 
For the QCI scheme, quantization points are chosen with equal probability, i.e., ${\widehat P}_{k, j_k} = \frac{1}{J}$ for all $k \in {\cal K}$ and $j_k \in {\cal J}$, with $J = 2^B$ and we vary the quantization bits $B = [1, 2, 3, 4]$ to yield the maximum value of the rate. For the TCI scheme, we vary the threshold \( S_{\text {th}} \) from 0 to 2 with a step size of 0.1 to find the maximum value of the rate.

In Fig.~\ref{R_VS_rho}, the achievable rates and the upper bound are plotted as functions of SNR, with $C = \{5, 10\}$ bits per complex dimension. 
When \(C = 5\) bits, as SNR increases, both \(R^{\text{ub}}\) and \(R^{\rm tci}\) approach the sum of the two relay CP link capacities, \( C_1 + C_2 \), demonstrating the superior performance of the TCI scheme. This observation is consistent with the limit behavior of $R^{\rm ub}$ in Lemma \ref{re:R_ub}. When \(C = 10\) bits, the convergence of the rates is not clearly visible in the practical SNR range of $0$ to $40$ dB.
However, both \(R^{\rm qci}\) and \(R^{\rm tci}\) closely match the upper bound \(R^{\rm ub}\), highlighting the effectiveness of both the TCI and QCI schemes.

\begin{figure*}[ht!]
\centering
        \begin{subfigure}[b]{0.45\textwidth}
        \includegraphics[width=1.03\columnwidth]{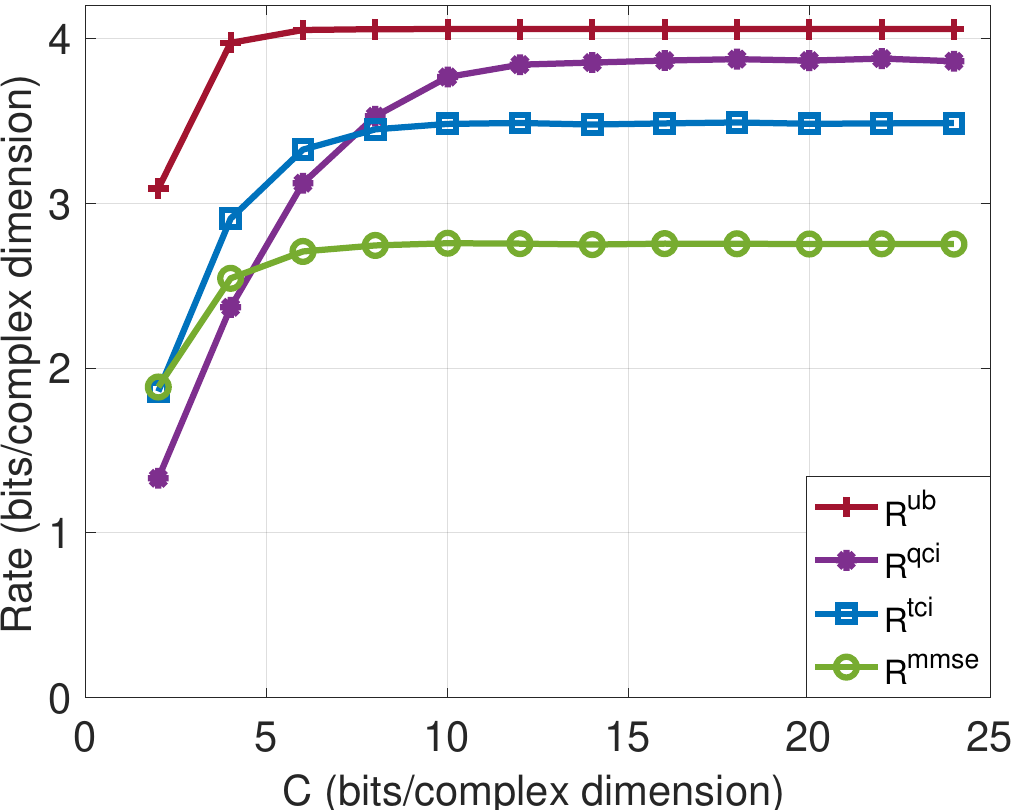}
        \caption{{SNR} $= 10$~dB}
        \label{R_VS_C_1}
        \end{subfigure}
        ~
        \begin{subfigure}[b]{0.45\textwidth}
        \includegraphics[width=\columnwidth]{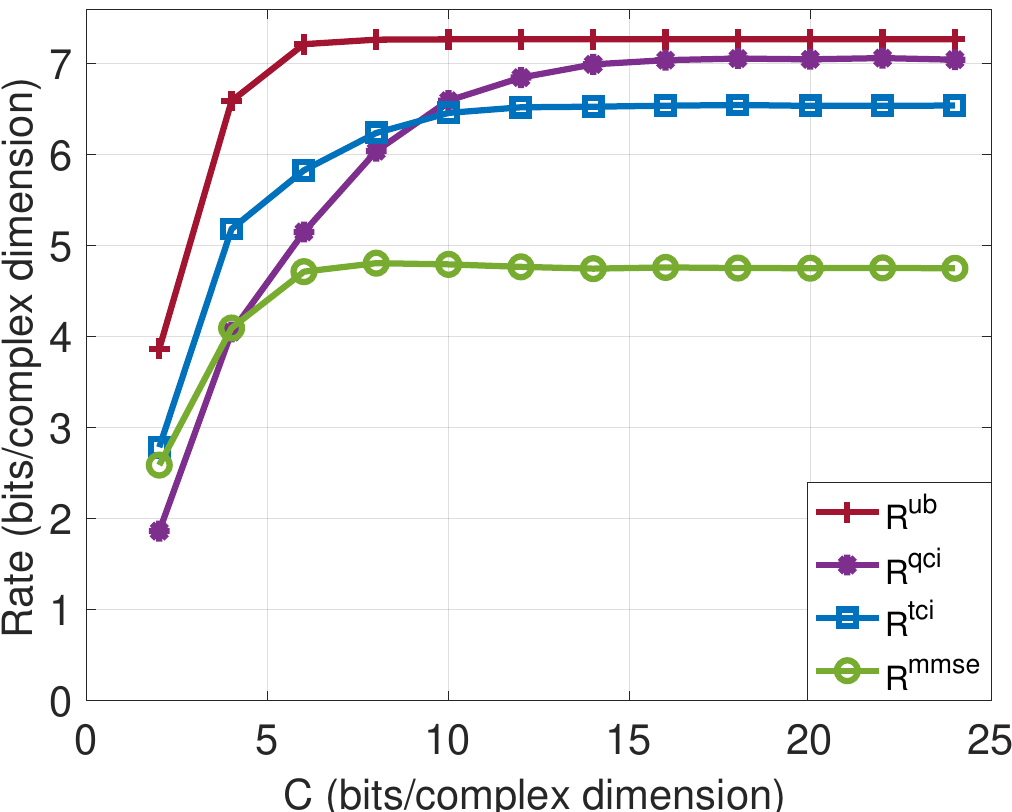}
        \caption{{SNR} $= 20$~dB}
        \label{R_VS_C_2}
        \end{subfigure}
 \caption{{The informed receiver upper bound and achievable rates from three achievable schemes versus $C$.}}
     \label{R_VS_C}
\end{figure*}

In addition, Fig.~\ref{R_VS_C} illustrates the effect of link capacity \( C \) on achievable rates for SNR values of \( \{10, 20\} \) dB. As \( C \) increases, all rates increase monotonically and eventually converge to constant values.
For small \( C \), the gap between \( R^{\rm tci} \) and \( R^{\rm ub} \) is smaller than the gap between \( R^{\rm qci} \) and \( R^{\rm ub} \), emphasizing the superior performance of the TCI scheme in the limited capacity regime. As \( C \) increases,  \( R^{\rm qci} \) performs better than \( R^{\rm tci} \), while both closely approach \( R^{\rm ub} \), demonstrating the superior performance of the QCI scheme when the link capacity is sufficiently large.

\subsection{The Two-user Case}

\subsubsection{Evaluation of the Cooperative Informed Receiver Upper Bound}
\begin{figure*}[t]
\centering
        \begin{subfigure}[b]{0.45\textwidth}
        \includegraphics[width=\columnwidth]{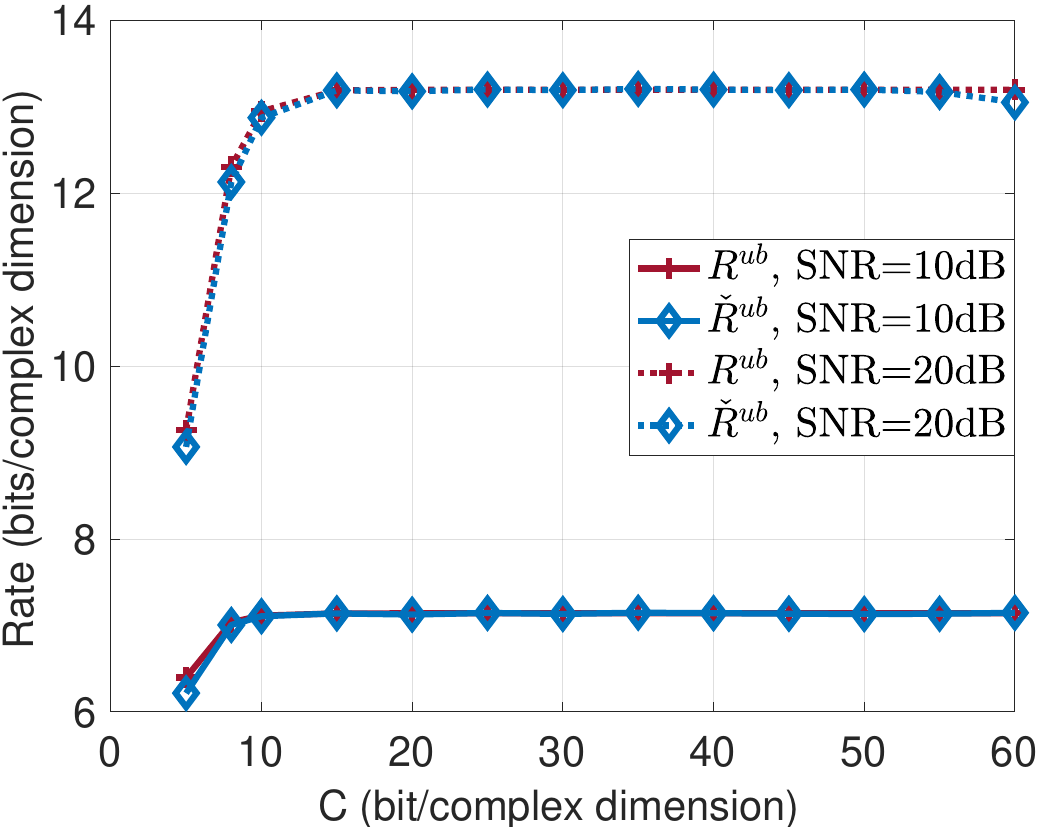}
        \caption{SNR$=\{10, 20\}$ dB}
        \label{two_user_ub_eval_SNR}
        \end{subfigure}
        ~~
        \begin{subfigure}[b]{0.45\textwidth}
        \includegraphics[width=\columnwidth]{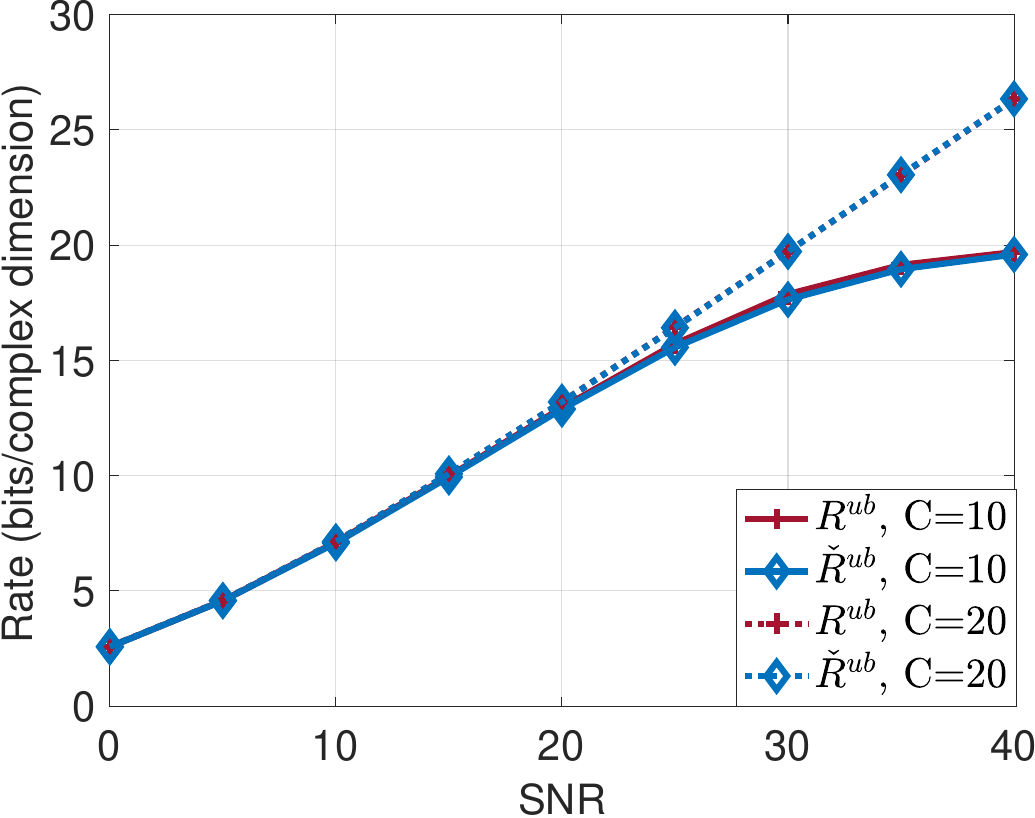}
        \caption{$C = \{10, 20\}$ bits/complex dimension}
        \label{two_user_ub_eval_C}
        \end{subfigure}
 \caption{{The cooperative informed receiver upper bound and the lower bound to the informed receiver upper bound. }}
     \label{two_user_UB}
\end{figure*}

In this subsection, we present simulations for the two-user model, considering the following:
\begin{itemize}
    \item the cooperative informed receiver upper bound, \( R^{\text{ub}} \), as defined in \eqref{R_up_KM_2};
    \item the lower bound of the informed receiver upper bound, \( \check{R}^{\text{ub}} \), as defined in \eqref{ergodic_problem_aa_two_user}.
\end{itemize}
In Fig. \ref{two_user_UB}, we compare the analytical upper bound \( R^{\text{ub}} \) with the lower bound \( \check{R}^{\text{ub}} \) under different SNR and relay-CP link capacities.
As shown in Lemma \ref{R_ub_two_user}, as \( C_1 + C_2 \) approaches infinity, \( R^{\rm ub} \) converges to the expression in \eqref{eq:limit_case_C_inf}, which is a monotonically increasing function of the channel SNR. This explains why the convergence value at 20 dB SNR is higher than that at 10 dB, as seen in Fig. \ref{two_user_ub_eval_SNR}. In addition, as SNR increases to infinity, \( R^{\rm ub} \) converges to \( \sum_{k \in {\cal K}} C_k \), which is also observed in Fig. \ref{two_user_ub_eval_C}.
In both figures, we observe that \( \check{R}^{\text{ub}} \) closely matches \( R^{\text{ub}} \), with a negligible gap over a wide range of system parameters. This confirms that the proposed upper bound is valid and effective in evaluating the proposed achievable schemes.

\subsubsection{Distortion in the Achievable Scheme}
\begin{figure*}[ht!]
\centering
        \begin{subfigure}[b]{0.45\textwidth}
        \includegraphics[width=\columnwidth]{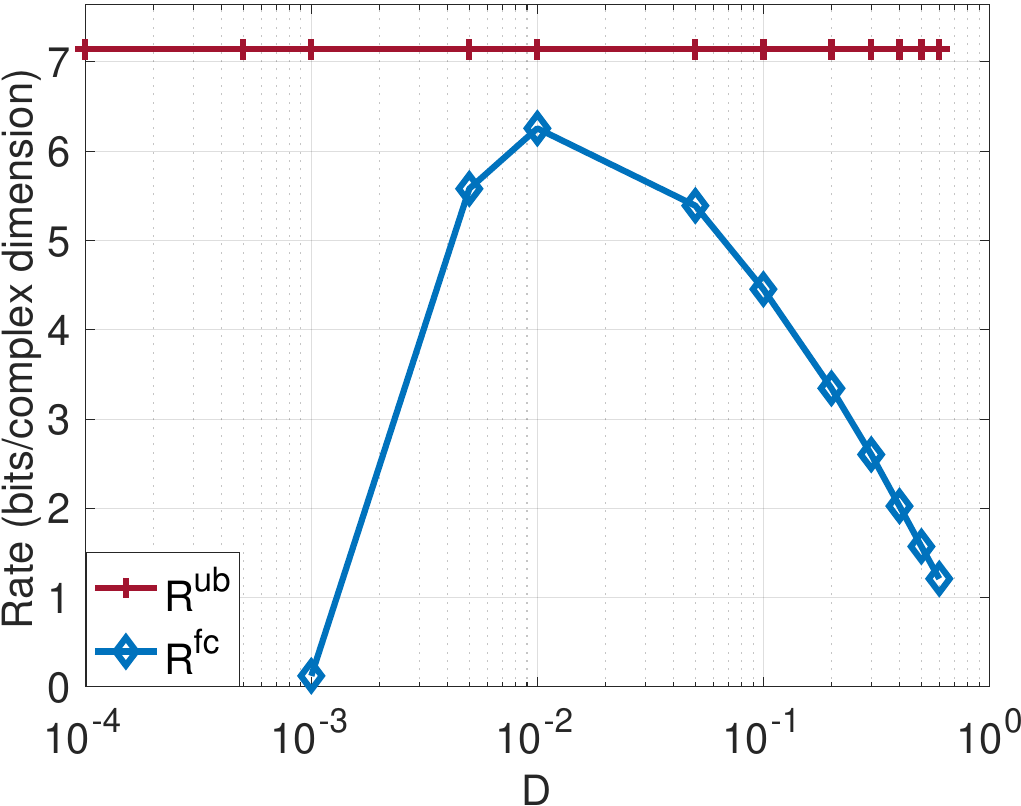}
        \caption{$C = 20$ bits/complex dimension  $\&$ SNR = $10~$dB}
        \end{subfigure}
        ~
        \begin{subfigure}[b]{0.46\textwidth}
        \includegraphics[width=\columnwidth]{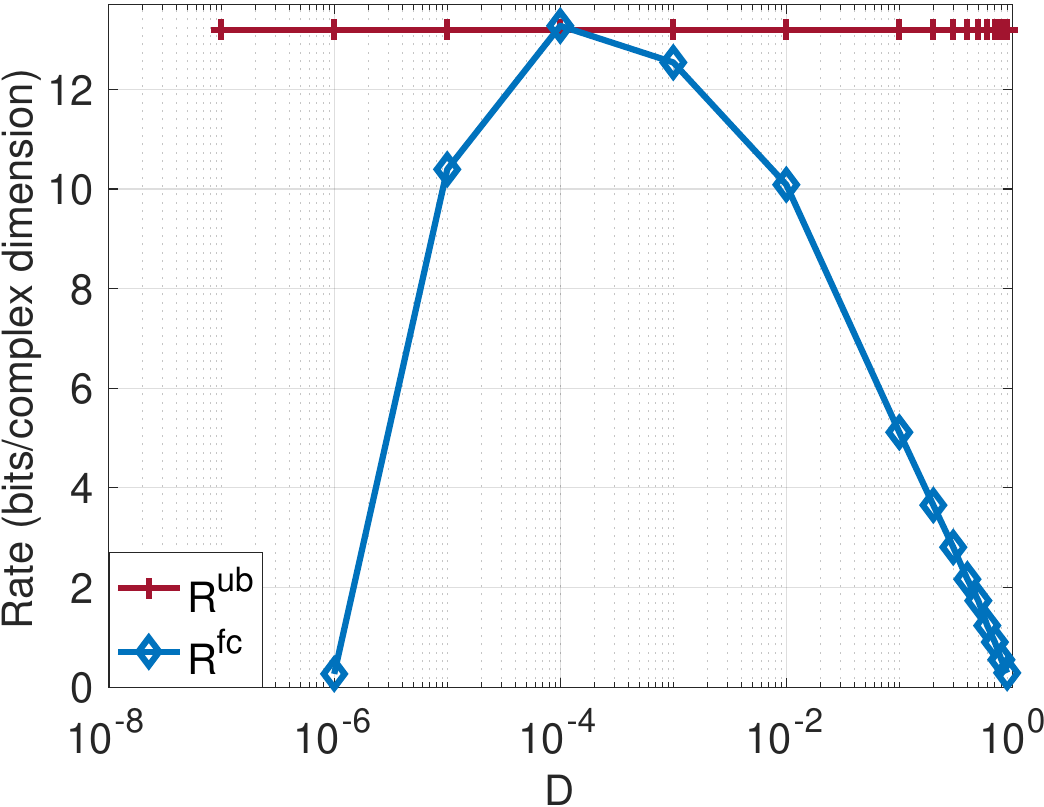}
        \caption{$C = 40$ bits/complex dimension $\&$ SNR = $20~$dB}
        \end{subfigure}
 \caption{{The achievable rate from the {fronthaul compression scheme} versus $D$.}}
     \label{fig:D_func}
\end{figure*}

For the two-user model, we investigate the effect of distortion \( D \) on the proposed achievable scheme for a given SNR and link capacity \( C \).  In the simulation, two scenarios are considered: \( \text{SNR} = 10~\text{dB} \) with \( C = 20 \) bits, and \( \text{SNR} = 20~\text{dB} \) with \( C = 40 \) bits. Fig. \ref{fig:D_func} shows the achievable rates of the fronthaul compression scheme as a function of \( D \), using a base-10 logarithmic scale on the x-axis. 
As can be seen in Fig. \ref{fig:D_func}, the achievable rate is not a monotonic function of \( D \). Instead, there exists an optimal value of \( D \) that maximizes the achievable rate for a given SNR and link capacity \( C \). This demonstrates a trade-off between the accuracy of the channel state representation and the achievable rate over the fronthaul link.

\subsubsection{Achievable Schemes and Cooperative Informed Receiver Upper Bound}

\begin{figure*}[t]
\centering
        \begin{subfigure}[b]{0.45\textwidth}
        \includegraphics[width=\columnwidth]{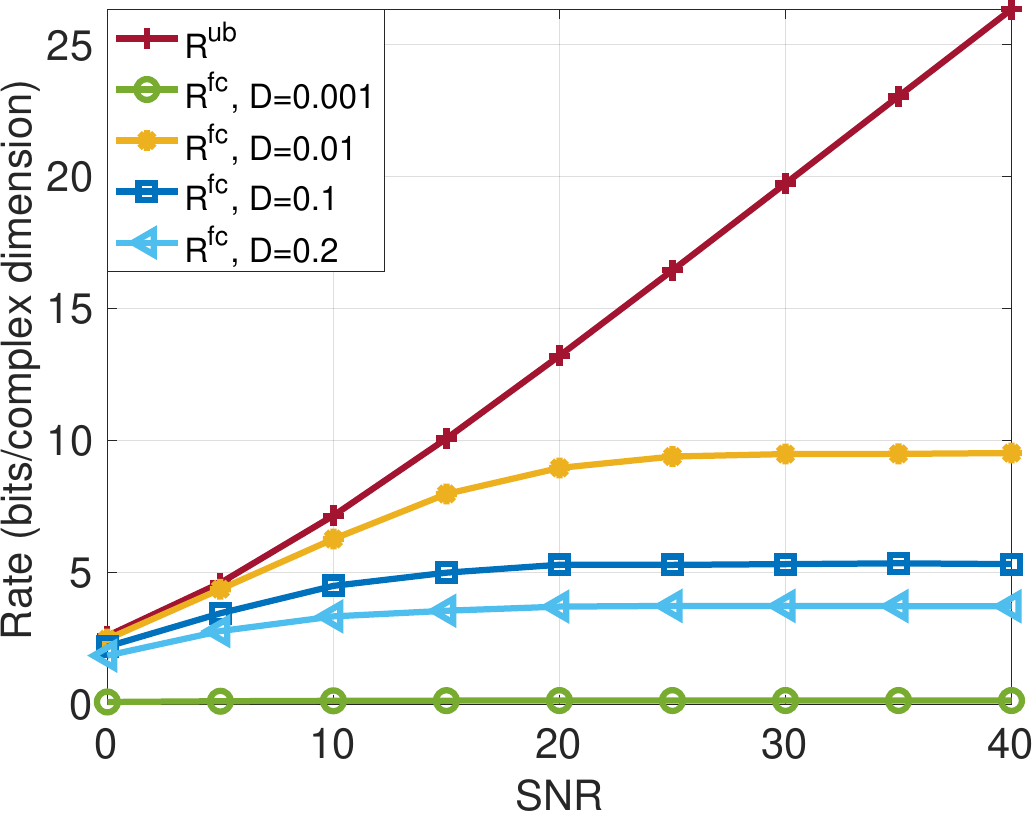}
        \caption{$C = 20$ bits/complex dimension}
        \end{subfigure}
        ~
        \begin{subfigure}[b]{0.45\textwidth}
        \includegraphics[width=\columnwidth]{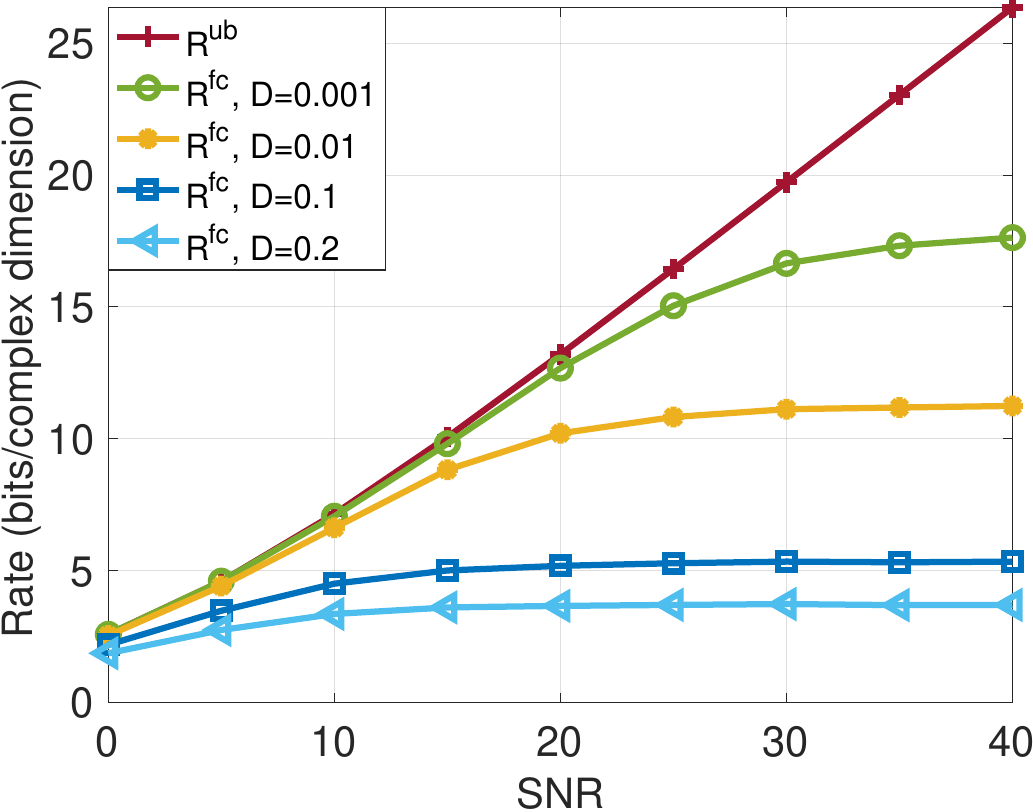}
        \caption{$C = 40$ bits/complex dimension}
        \end{subfigure}
 \caption{{The informed receiver upper bound and the achievable rate from the {fronthaul compression scheme} versus SNR.}}
     \label{R_VS_rho_CRAN}
\end{figure*}

\begin{figure*}[t]
\centering
        \begin{subfigure}[b]{0.45\textwidth}
        \includegraphics[width=\columnwidth]{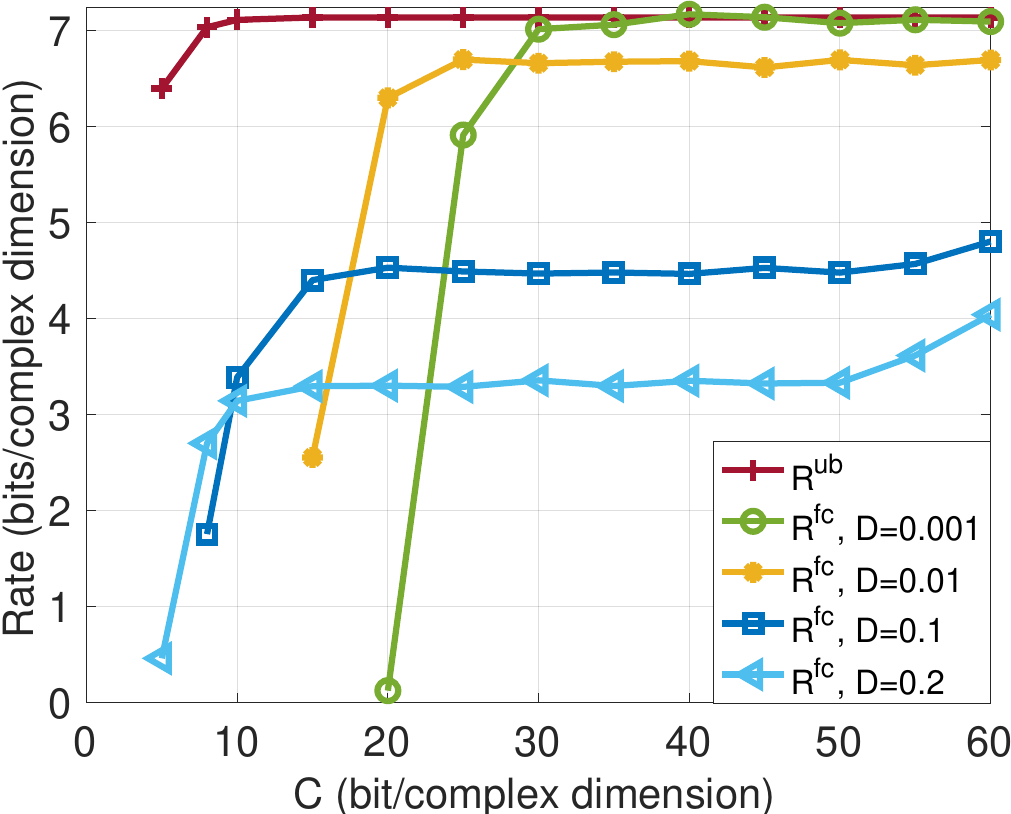}
        \caption{${\rm SNR} =10$~dB}
        \end{subfigure}
        ~
        \begin{subfigure}[b]{0.45\textwidth}
        \includegraphics[width=\columnwidth]{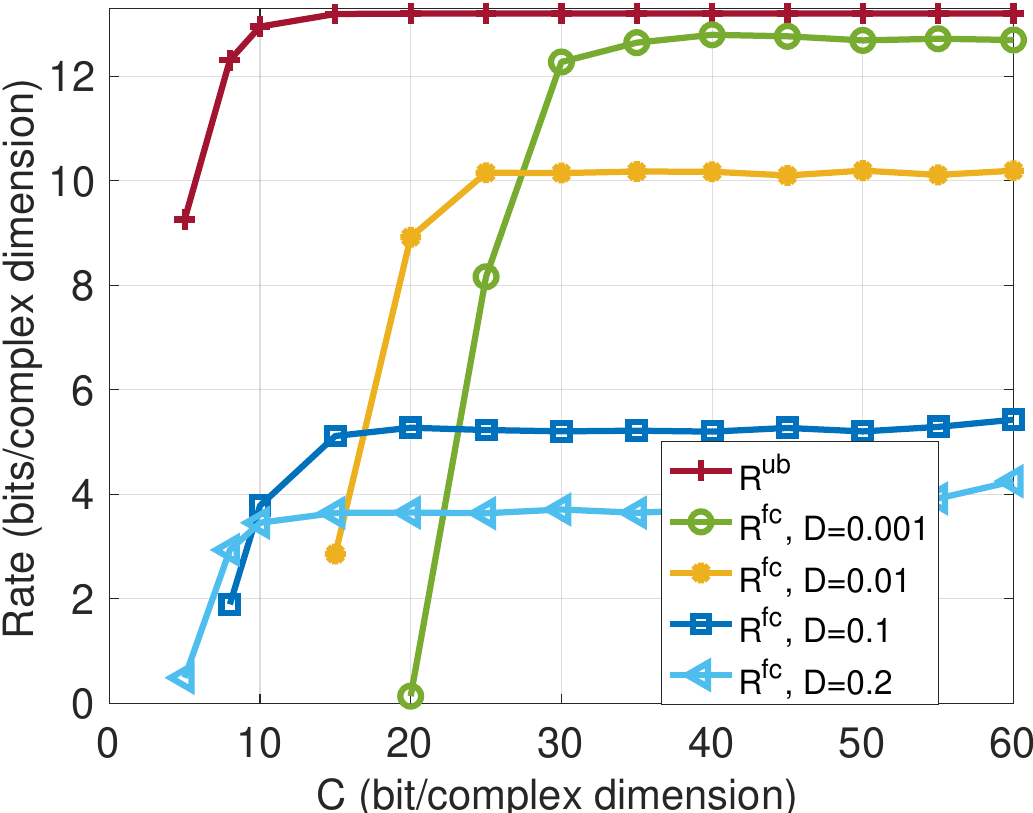}
        \caption{${\rm SNR} =20$~dB}
        \end{subfigure}
 \caption{{The informed receiver upper bound and the achievable rate from the {fronthaul compression scheme} versus link capacity.}}
     \label{R_VS_C_CRAN}
\end{figure*}


Fig. \ref{R_VS_rho_CRAN} compares the cooperative informed receiver upper bound \( R^{\rm ub} \) with the achievable rates from the fronthaul compression scheme \( R^{\rm fc} \), over various distortion levels \( D \) as SNR varies from 0 dB to 40 dB. The link capacities are fixed at \( C = \{20, 40\} \) bits.
As shown, \( R^{\rm ub} \) increases monotonically with SNR in the range of 0 to 40 dB. For different values of \( D \), the achievable rates first increase with SNR and then plateau at a horizontal asymptote determined by \( C \), \( D \), and SNR. In the limited-capacity regime, such as \( D = 0.001 \) and \( C = 20 \) bits, the low distortion requires a significant portion of the available capacity allocated to encode the channel state. This allocation completely consumes the link capacity, leaving no capacity for signaling and resulting in a zero rate.
As \( C \) increases to $40$ bits for the same distortion (\( D = 0.001 \)), the capacity allocated to encode the channel state remains unchanged, but the larger total capacity allows more bits to be allocated for signal transmission, resulting in higher achievable rates. For sufficient link capacities (\( C = 40 \) bits), \( R^{\rm fc} \) for \( D = 0.001 \) approaches \( R^{\rm ub} \) over a wide range of practically relevant SNR values. This illustrates the trade-off between the accuracy of channel state information and the efficiency of signal transmission. 
It is also important to note that when link capacity becomes the bottleneck, increasing the SNR does not result in a proportional increase in the achievable rate.

Fig. \ref{R_VS_C_CRAN} further analyzes the effect of link capacity \( C \), showing \( R^{\rm ub} \) alongside the achievable rate for the fronthaul compression scheme \( R^{\rm fc} \) for SNR values of \( \{10, 20\} \) dB as \( C \) ranges from $5$ to $60$ bits. In the low-capacity regime, limited link capacity restricts the number of bits available for channel state encoding, resulting in significant channel state quantization errors and creating a bottleneck in the achievable rate. 
In this regime of low fronthaul rate, our achievable scheme is quite far from the upper bound.
This shows that there is potentially some significant gain to be made by devising more refined achievable schemes. 
As \( C \) increases, more bits can be allocated for channel state compression, enhancing the achievable rates. When the link capacity is sufficiently large ($C = 40$ bits), \( R^{\rm fc} \) can approach \( R^{\rm ub} \) with an appropriate choice of distortion (\( D \)). This illustrates that the FC scheme becomes very effective when 
the rate constraint in the fronthaul is not the system bottleneck.

\section{Conclusions}
\label{conclusion}

In this paper, we studied the fundamental limits of a disaggregated radio access network where relays operate without knowledge of users' codebooks and the central processor lacks channel state information. The focus was on understanding the achievable rates under these constraints in a setting with either one or two users and two relays. For the single-user case, we derived analytical upper bounds assuming genie-aided access to channel state at the central processor and proposed three practical achievability schemes based on local relay processing. These included strategies using quantized and truncated channel inversion as well as minimum mean square error estimation. For the two-user scenario, where the challenges increase due to the inability to invert the channel at the relays, we extended the analysis by deriving new upper bounds and proposing a novel achievable scheme that jointly compresses the received signal and the channel state. Across both settings, we demonstrated through simulations that the achievable schemes perform close to the respective upper bounds over a broad range of system parameters. The results highlight that significant rate gains can be achieved even with only partial channel state information and local processing, underscoring the potential of simple, distributed strategies for future wireless architectures constrained by fronthaul capacity and partial observability.

While this work addresses the sum capacity characterization under the assumption of i.i.d. channel states known only at the relays, several natural extensions remain open and need further exploration. 
One compelling research direction involves relay cooperation through finite-capacity links. This setting introduces new challenges and opportunities, where results such as those developed in \cite{dikshtein2022bounds} become relevant. Another important extension relaxes the i.i.d. assumption on the channel state sequence. Specifically, if the state remains fixed for a block of \( T \) channel uses and is known only at the relays, the problem interpolates between the setting studied in this paper (\( T=1 \)) and a regime of \( T \to \infty \) where there is no cost to convey the channel state to the CP. 
This provides rich connections to the framework of the broadcast approach as discussed in \cite{tajer2021broadcast}.
Another interesting extension is the case of dependent fading coefficients across users, which occurs when they operate in similar physical environments, resulting in correlated time-varying channels.

\appendices
\section{Proof of Lemma \ref{re:R_ub}}
\label{R_ub}
According to \eqref{bottle_constr_KM}, $\int_{\nu \sigma^2}^{\infty} \left( \log \frac{\lambda}{\nu \sigma^2} \right) f_\lambda (\lambda) d \lambda$ increases as $\nu$ decreases. When SNR approaches infinity, i.e., $\sigma \rightarrow 0$, to ensure that \eqref{bottle_constr_KM} is satisfied, $\nu$ becomes large. Then, we have 
\begin{align}
    R^{\rm ub} &=  \int_{\nu \sigma^2}^{\infty} \left[ \log \left(1 + \frac{\lambda}{\sigma^2} \right) - \log (1 + \nu)\right] f_\lambda (\lambda) d \lambda \nonumber \\
    & \rightarrow \int_{\nu \sigma^2}^{\infty} \left[ \log \left(\frac{\lambda}{\sigma^2} \right) - \log (\nu)\right] f_\lambda (\lambda) d \lambda \nonumber \\
    & = \sum_{k \in {\cal K}} C_k.
\end{align}
Therefore, when SNR approaches infinity, $R^{\rm ub}$ converges to $\sum_{k \in {\cal K}} C_k$.

Next, when SNR is fixed, and $\sum_{k \in {\cal K}} C_k$ approaches infinity, from \eqref{bottle_constr_KM}, it can be found that $\nu$ is $0$. Therefore, based on \eqref{R_up_KM}, the upper bound approaches 
\begin{align}
    R^{\rm ub} \rightarrow \int_{0}^{\infty} \left[ \log \left(1 + \frac{\lambda}{\sigma^2} \right)\right] f_\lambda (\lambda) d \lambda.
\end{align}

\section{Derivation of Theorem \ref{th:optimal_analysis}}
\label{sec:DDP_theorem_proof}

We adopt the classical Lyapunov DPP framework \cite{neely2010stochastic, neely2014simple} to solve problem~ \eqref{ergodic_problem_LB}. The solution approach consists of the following three components, evaluated iteratively at each time slot \( t \):

\begin{itemize}
    \item[1)] \textbf{Allocated instantaneous link capacities:} Observe the virtual queue states \( Q_k(t) \), and determine the virtual arrivals \( \{c_k(t, \rho_\Kc(t))\} \) as the optimal value of the $\{c_k\}$ variables in the optimization problem:
    \begin{subequations}\label{ergodic_problem_drift}
        \begin{align}
            \mathop{ \min}_{c_{\mathcal{K}}} \quad & \sum_{k \in \mathcal{K}} c_k Q_k(t) - V R(\rho_{\mathcal{K}}(t), c_{\mathcal{K}}) \\
            \text{s.t.} \quad & 0 \leq c_k \leq C_{\max}, \quad \forall k \in \mathcal{K},
        \end{align}
    \end{subequations}
    where \( R(\rho_{\mathcal{K}}(t), c_{\mathcal{K}}) \) is defined in \eqref{R_fixed_rho}. 
As we have seen, this problem can be reformulated as the convex program in \eqref{eq:solve_problem} by introducing an auxiliary variable \( \beta \), and yields \eqref{eq:solve_problem}.
    
    \item[2)] \textbf{Instantaneous rate:} The optimal instantaneous rate \( \beta^*(t) \) is obtained as the solution to the reformulated problem \eqref{eq:solve_problem}, corresponding to the optimal value of the variable $\beta$.

    \item[3)] \textbf{Virtual Queue Update:} The virtual queues $Q_k(t+1)$ evolve according to the update rule in \eqref{eq:virtual_queue}.
\end{itemize}

Let \( Q_{\Kc}(t) = \{Q_1(t), Q_2(t)\} \) denote the virtual queue backlogs and
define the quadratic Lyapunov function as  
\[
    L(Q_\Kc(t)) = \frac{1}{2} \sum_{k \in \mathcal{K}} Q_k^2(t),
\]  
and the corresponding conditional Lyapunov drift as  
\[
    \Delta(Q_\Kc(t)) = \mathbb{E} \left[ L(Q_\Kc(t+1)) - L(Q_\Kc(t)) \,|\, Q_\Kc(t) \right].
\]
To bound the Lyapunov drift, observe that from \eqref{eq:virtual_queue} and the inequality \( \max(x, 0)^2 \leq x^2 \), we have:
\begin{align}\label{eq:bound_dif_Q_t}
    Q_k(t+1)^2 - Q_k(t)^2 
    &\leq \left( Q_k(t) + c_k(t, \rho_{\mathcal{K}}(t)) - C_k \right)^2 - Q_k^2(t) \nonumber \\
    &= 2 Q_k(t) \left( c_k(t, \rho_{\mathcal{K}}(t)) - C_k \right) + \left( c_k(t, \rho_{\mathcal{K}}(t)) - C_k \right)^2.
\end{align}

Summing over \( k \in \mathcal{K} \) and taking expectations, we obtain an upper bound on the conditional drift:
\begin{align}\label{eq:delta_theta_t}
    \Delta(Q_\Kc(t)) 
    &\leq \mathbb{E} \left[ \sum_{k \in \mathcal{K}} (c_k(t, \rho_{\mathcal{K}}(t)) - C_k) Q_k(t) \,|\, Q_\Kc(t) \right] + \frac{1}{2} \mathbb{E} \left[ \sum_{k \in \mathcal{K}} \left( c_k(t, \rho_{\mathcal{K}}(t)) - C_k \right)^2 \,|\, Q_\Kc(t) \right].
\end{align}
By Theorem 4.2 in \cite{neely2010stochastic}, the time-averaged rate achieved by the drift-plus-penalty 
algorithm is within \( O(1/V) \) of the optimal value. Specifically, we have the bound:
\begin{align}
    \lim_{T \rightarrow \infty} \frac{1}{T} \sum_{t=0}^{T-1} \beta^*(t) \geq R^{\rm ub}_0 - \frac{B}{V},
\end{align}
where \( R^{\rm ub}_0 \) denotes the optimal objective value of problem \eqref{ergodic_problem_LB}, 
and \( B \) is a finite constant that upper bounds the second term in the RHS of \eqref{eq:delta_theta_t}.
From the constraint \( 0 \leq c_k \leq C_{\max} \), an upper bound to the second term in the RHS of \eqref{eq:delta_theta_t} is given by
\begin{align}\label{eq:B}
    \frac{1}{2} \mathbb{E} \left[ \sum_{k \in \mathcal{K}} (c_k(t, \rho_{\mathcal{K}}(t)) - C_k)^2 \,|\, Q_\Kc(t) \right] 
    &\leq \frac{1}{2} \sum_{k \in \mathcal{K}} \max \left\{ C_k^2, (C_{\max} - C_k)^2 \right\} \nonumber \\
    &\overset{\Delta}{=} B.
\end{align}

\bibliographystyle{IEEEtran}
\bibliography{IEEEabrv,Ref,refs-CF-UC}

\begin{thebibliography}{10}
\providecommand{\url}[1]{#1}
\csname url@samestyle\endcsname
\providecommand{\newblock}{\relax}
\providecommand{\bibinfo}[2]{#2}
\providecommand{\BIBentrySTDinterwordspacing}{\spaceskip=0pt\relax}
\providecommand{\BIBentryALTinterwordstretchfactor}{4}
\providecommand{\BIBentryALTinterwordspacing}{\spaceskip=\fontdimen2\font plus
\BIBentryALTinterwordstretchfactor\fontdimen3\font minus
  \fontdimen4\font\relax}
\providecommand{\BIBforeignlanguage}[2]{{%
\expandafter\ifx\csname l@#1\endcsname\relax
\typeout{** WARNING: IEEEtran.bst: No hyphenation pattern has been}%
\typeout{** loaded for the language `#1'. Using the pattern for}%
\typeout{** the default language instead.}%
\else
\language=\csname l@#1\endcsname
\fi
#2}}
\providecommand{\BIBdecl}{\relax}
\BIBdecl

\bibitem{3gpp2018technical}
B.~3GPP, ``Technical specification group radio access network; {NR}; physical
  layer procedures for data (release 15),'' \emph{3rd Generation Partnership
  Project (3GPP), Technical Specification (TS) 38.214}, 2018.

\bibitem{ahmadi20195g}
S.~Ahmadi, \emph{{5G} {NR}: Architecture, Technology, Implementation, and
  Operation of {3GPP} New Radio Standards}.\hskip 1em plus 0.5em minus
  0.4em\relax Elsevier Science, 2019.

\bibitem{xu2025distributed}
Y.~Xu, E.~G. Larsson, E.~A. Jorswieck, X.~Li, S.~Jin, and T.-H. Chang,
  ``Distributed signal processing for extremely large-scale antenna array
  systems: State-of-the-art and future directions,'' \emph{IEEE J. Sel. Top.
  Signal Process.}, 2025.

\bibitem{7018201}
J.~Wu, Z.~Zhang, Y.~Hong, and Y.~Wen, ``Cloud radio access network {(C-RAN)}: a
  primer,'' \emph{IEEE Netw.}, vol.~29, no.~1, pp. 35--41, Jan. 2015.

\bibitem{6897914}
A.~Checko, H.~L. Christiansen, Y.~Yan, L.~Scolari, G.~Kardaras, M.~S. Berger,
  and L.~Dittmann, ``{Cloud RAN} for mobile networks—a technology overview,''
  \emph{IEEE Commun. Surveys Tuts.}, vol.~17, no.~1, pp. 405--426, 1st Quart.
  2015.

\bibitem{park2014fronthaul}
S.-H. Park, O.~Simeone, O.~Sahin, and S.~Shamai~Shitz, ``Fronthaul compression
  for cloud radio access networks: Signal processing advances inspired by
  network information theory,'' \emph{IEEE Signal Process. Mag.}, vol.~31,
  no.~6, pp. 69--79, 2014.

\bibitem{homri2018oblivious}
A.~Homri, M.~Peleg, and S.~S. Shitz, ``Oblivious fronthaul-constrained relay
  for a {G}aussian channel,'' \emph{IEEE Trans. commun.}, vol.~66, no.~11, pp.
  5112--5123, Nov. 2018.

\bibitem{katz2019gaussian}
A.~Katz, M.~Peleg, and S.~Shamai, ``{Gaussian} diamond primitive relay with
  oblivious processing,'' in \emph{Proc. IEEE Int. Conf. Micro., Ant., Commun.,
  Elec. Syst. (COMCAS)}, Nov. 2019, pp. 1--6.

\bibitem{aguerri2019TIT}
I.~E. Aguerri, A.~Zaidi, G.~Caire, and S.~S. Shitz, ``On the capacity of cloud
  radio access networks with oblivious relaying,'' \emph{IEEE Trans. Inf.
  Theory}, vol.~65, no.~7, pp. 4575--4596, July 2019.

\bibitem{kim2008capacity}
Y.-H. Kim, ``Capacity of a class of deterministic relay channels,'' \emph{IEEE
  Trans. Inf. Theory}, vol.~54, no.~3, pp. 1328--1329, 2008.

\bibitem{katz2021filtered}
A.~Katz, M.~Peleg, and S.~Shamai, ``The filtered {Gaussian} primitive diamond
  channel,'' \emph{arXiv preprint arXiv:2101.09564}, 2021.

\bibitem{xudistributed}
H.~Xu, K.-K. Wong, G.~Caire, and S.~S. Shitz, ``Distributed information
  bottleneck for a primitive {{Gaussian}} diamond channel with {{Rayleigh}}
  fading,'' in \emph{Proc. IEEE Int. Symp. Inf. Theory (ISIT)}, Espoo, Finland,
  Jun. 2022, pp. 2845--2850.

\bibitem{simeone2016cloud}
O.~Simeone, J.~Kang, J.~Kang, and S.~Shamai, ``Cloud radio access networks:
  Uplink channel estimation and downlink precoding,'' \emph{Signal Processing
  for 5G: Algorithms and Implementations}, pp. 429--455, 2016.

\bibitem{zheng2002communication}
L.~Zheng and D.~N.~C. Tse, ``Communication on the grassmann manifold: A
  geometric approach to the noncoherent multiple-antenna channel,'' \emph{IEEE
  Trans. Inf. Theory}, vol.~48, no.~2, pp. 359--383, 2002.

\bibitem{tishby2000information}
N.~Tishby, F.~C. Pereira, and W.~Bialek, ``The information bottleneck method,''
  \emph{arXiv preprint physics/0004057}, 2000.

\bibitem{hassanpour2017overview}
S.~Hassanpour, D.~W{\"u}bben, and A.~Dekorsy, ``Overview and investigation of
  algorithms for the information bottleneck method,'' in \emph{Proc. 11th Int.
  ITG Conf. Syst., Commun. Coding (SCC)}.\hskip 1em plus 0.5em minus
  0.4em\relax VDE, Feb. 2017, pp. 1--6.

\bibitem{shwartz2017opening}
R.~Shwartz-Ziv and N.~Tishby, ``Opening the black box of deep neural networks
  via information,'' \emph{arXiv preprint arXiv:1703.00810}, 2017.

\bibitem{zaidi2020information}
A.~Zaidi, I.~Estella-Aguerri, and S.~S. Shitz, ``On the information bottleneck
  problems: models, connections, applications and information theoretic
  views,'' \emph{Entropy}, vol.~22, no.~2, p. 151, Jan. 2020.

\bibitem{goldfeld2020information}
Z.~Goldfeld and Y.~Polyanskiy, ``The information bottleneck problem and its
  applications in machine learning,'' \emph{IEEE J. Sel. Areas Inf. Theory},
  vol.~1, no.~1, pp. 19--38, May 2020.

\bibitem{chechik2003information}
G.~Chechik, A.~Globerson, N.~Tishby, and Y.~Weiss, ``Information bottleneck for
  {Gaussian} variables,'' \emph{Advances in Neural Information Processing
  Systems}, vol.~16, 2003.

\bibitem{dandervoich2008communication}
A.~Sanderovich, S.~Shamai, Y.~Steinberg, and G.~Kramer, ``Communication via
  decentralized processing,'' \emph{IEEE Trans. Inf. Theory}, vol.~54, no.~7,
  pp. 3008--3023, Jul. 2008.

\bibitem{estella2018distributed}
I.~Estella~Aguerri and A.~Zaidi, ``Distributed information bottleneck method
  for discrete and {Gaussian} sources,'' in \emph{Proc. Int. Zurich Sem. Inf.
  Commun. (IZS 2018)}.\hskip 1em plus 0.5em minus 0.4em\relax ETH Zurich, Feb.
  2018, pp. 35--39.

\bibitem{aguerri2019distributed2}
I.~E. Aguerri and A.~Zaidi, ``Distributed variational representation
  learning,'' \emph{IEEE Trans. Pattern Anal. Mach. Intell.}, vol.~43, no.~1,
  pp. 120--138, July 2019.

\bibitem{courtade2013multiterminal}
T.~A. Courtade and T.~Weissman, ``Multiterminal source coding under logarithmic
  loss,'' \emph{IEEE Trans. Inf. Theory}, vol.~60, no.~1, pp. 740--761, Jan.
  2014.

\bibitem{sanderovich2009distributed}
A.~Sanderovich, S.~Shamai, and Y.~Steinberg, ``Distributed {MIMO}
  receiver—achievable rates and upper bounds,'' \emph{IEEE Trans. Inf.
  Theory}, vol.~55, no.~10, pp. 4419--4438, Oct. 2009.

\bibitem{caire2018information}
G.~Caire, S.~Shamai, A.~Tulino, S.~Verdu, and C.~Yapar, ``Information
  bottleneck for an oblivious relay with channel state information: the scalar
  case,'' in \emph{Proc. IEEE Int. Conf. Science of Electrical Engineering in
  Israel (ICSEE)}, Eilat, Israel, Dec. 2018, pp. 1--5.

\bibitem{IBxu}
H.~Xu, T.~Yang, G.~Caire, and S.~S. Shitz, ``Information bottleneck for an
  oblivious relay with channel state information: the vector case,''
  \emph{Information}, vol.~12, no.~4, Apr. 2021.

\bibitem{song2023distributed}
Y.~Song, H.~Xu, K.-K. Wong, G.~Caire, and S.~Shamai~Shitz, ``Distributed
  information bottleneck for a primitive {{Gaussian}} diamond {MIMO} channel,''
  in \emph{Proc. IEEE Int. Symp. Inf. Theory (ISIT)}, Tapei, Taiwan, June 2023,
  pp. 1484--1489.

\bibitem{baccelli2011interference}
F.~Baccelli, A.~El~Gamal, and N.~David, ``Interference networks with
  point-to-point codes,'' \emph{IEEE Trans. Inf. Theory}, vol.~57, no.~5, pp.
  2582--2596, 2011.

\bibitem{georgiadis2006resource}
L.~Georgiadis, M.~J. Neely, L.~Tassiulas \emph{et~al.}, ``Resource allocation
  and cross-layer control in wireless networks,'' \emph{Foundations and
  Trends{\textregistered} in Networking}, vol.~1, no.~1, pp. 1--144, 2006.

\bibitem{neely2010stochastic}
M.~J. Neely, ``Stochastic network optimization with application to
  communication and queueing systems,'' \emph{Synth. Lectures Commun. Netw.},
  vol.~3, no.~1, pp. 1--211, 2010.

\bibitem{cover2012elements}
T.~M. Cover and J.~A. Thomas, \emph{{Elements of Information Theory}}.\hskip
  1em plus 0.5em minus 0.4em\relax John Wiley \& Sons, 1999.

\bibitem{lapidothmismatch1997}
A.~Lapidoth, ``On the role of mismatch in rate distortion theory,'' \emph{IEEE
  Trans. Inf. Theory}, vol.~43, no.~1, pp. 38--47, 1997.

\bibitem{el2011network}
A.~El~Gamal and Y.-H. Kim, \emph{{Network Information Theory}}.\hskip 1em plus
  0.5em minus 0.4em\relax {Cambridge University Press}, 2011.

\bibitem{tulino2004random}
A.~M. Tulino, S.~Verd{\'u} \emph{et~al.}, \emph{Random matrix theory and
  wireless communications}.\hskip 1em plus 0.5em minus 0.4em\relax {Now
  Publishers}, 2004.

\bibitem{dikshtein2022bounds}
M.~Dikshtein, S.~S. Bidokhti, and S.~S. Shitz, ``Bounds on the capacity of the
  multiple access diamond channel with cooperating base-stations,'' in
  \emph{Proc. IEEE Int. Symp. Inf. Theory (ISIT)}.\hskip 1em plus 0.5em minus
  0.4em\relax IEEE, 2022, pp. 3031--3036.

\bibitem{tajer2021broadcast}
A.~Tajer, A.~Steiner, and S.~Shamai, ``The broadcast approach in communication
  networks,'' \emph{Entropy}, vol.~23, no.~1, p. 120, 2021.

\bibitem{neely2014simple}
M.~J. Neely, ``A simple convergence time analysis of drift-plus-penalty for
  stochastic optimization and convex programs,'' \emph{arXiv preprint
  arXiv:1412.0791}, Dec. 2014.

\end{thebibliography}
\end{document}